\newcommand{\newalgname}[1]{
  \renewcommand{\ALG@name}{#1}
}
\renewcommand{\listalgorithmname}{List the \ALG@name}
\newtheorem{thm}{Theorem}
\newtheorem{inv}{Invariant}
\newtheorem{lem}{Lemma}
\newcommand{\keywords}[1]{\par\addvspace\baselineskip
\noindent\keywordname\enspace\ignorespaces#1}
\begin{document}

\mainmatter  

\title{Group Mutual Exclusion in Linear Time and Space}

\titlerunning{Group Mutual Exclusion in Linear Time and Space}

%
%
\author{Yuan He\inst{1}%
\and Krishnan Gopalakrishnan\inst{2} \and Eli Gafni\inst{1}} 
\authorrunning{Group Mutual Exclusion in Linear Time and Space}
\institute{
Department of Computer Science \\
University of California at Los Angeles \\
Los Angeles, CA-90095
\and
Department of Computer Science \\
East Carolina University \\ Greenville, NC-27858}

%
%

\maketitle

\begin{abstract}
We present two algorithms for the {\bf Group Mutual Exclusion (GME) Problem} that satisfy the properties of {\em Mutual Exclusion, Starvation Freedom, Bounded Exit, Concurrent Entry and First Come First Served}.
Both our algorithms use only simple read and write instructions, have $O(N)$ Shared Space complexity and $O(N)$ Remote Memory Reference (RMR) complexity in the Cache Coherency (CC) model.
Our first algorithm is developed by generalizing the well-known Lamport's Bakery Algorithm for the classical mutual exclusion problem, while preserving its simplicity and elegance.
However, it uses unbounded shared registers.
Our second algorithm uses only bounded registers and is developed by generalizing Taubenfeld's Black and White Bakery Algorithm to solve the classical mutual exclusion problem using only bounded shared registers.
We show that contrary to common perception our algorithms are the first to achieve these properties with these combination of complexities.

\keywords{Mutual Exclusion, Group Mutual Exclusion, Remote Memory Reference Complexity,
Lamport's Bakery Algorithm, Black and White Bakery Algorithm}
\end{abstract}
\begin{enumerate}
\item Contact author's email address is gopal@ecu.edu and phone number is (252)-328-9688. His address is Suite C-114, Science and Technology Building, Department of Computer Science, East Carolina University, Greenville, NC-27858, USA.

\item This paper is a regular submission.
\item This paper is eligible to be considered for the best student paper award as the first author is a full-time student.
\end{enumerate}
\newpage

\setcounter{page}{1}
\pagestyle{plain}

\section{Introduction}
\label{intro}

{\em Mutual Exclusion} is a classical problem in distributed computing introduced by Dijkstra in 1965 \cite{Dij65}.
The {\em Group Mutual Exclusion (GME)} problem, introduced by Joung in 1998 \cite{Jou00}, is a natural generalization of the classical mutual exclusion problem.

In Group Mutual Exclusion problem, processes repeatedly cycle through four sections of code viz., {\em Remainder Section, Entry Section, Critical Section (CS)} and {\em Exit Section}, in that order.
An execution of the last three sections will be called an {\em invocation}.
A process picks a session number when it leaves the remainder section and this session number can be different in each invocation.
A process is said to be an {\em active} process, if it is in one of its invocations.
Two active processes are in {\em conflict} if their session numbers are different.  
Unlike the classical mutual exclusion problem, multiple processes are allowed to be in the critical section at the same time, provided they are not conflicting. In fact, in the presence of active processes all of which are mutually non-conflicting, entry into the critical section takes bounded number of process steps.
Formally, the problem consists of designing code for the entry section and the exit section such that the following four properties are satisfied.

\begin{description}
	\item[P1] {\bf Mutual Exclusion:} No two conflicting processes can be in the critical section at the same time.
	\item[P2] {\bf Starvation Freedom:} If no process stays in the critical section forever, then any process that enters the entry section eventually
	enters the critical section.
	\item[P3] {\bf Bounded Exit:} After entering the exit section, a process is guaranteed to leave it within a bounded number of its own steps.
	\item[P4] {\bf Concurrent Entry:} In the absence of conflicting processes, a process in the entry section should be guaranteed to enter the critical section within a bounded number of its own steps.
\end{description}

The {\em Concurrent Entry} property is crucial to the GME problem.
It was stated informally by Joung \cite{Jou00} and then was later formalized by Hadzilacos \cite{Hadz01}.
The intent of this property is to ensure concurrency: active processes that request the same session, in the absence of conflicting processes, should be allowed to enter the CS without unnecessary synchronization among themselves.

In this paper, we will require that the algorithm satisfies the {\em First Come First Served (FCFS)} property in addition to the above four properties.
The standard way to formalize this is to split the entry section to two sections viz., {\em Doorway Section} and {\em Waiting Room Section}.
The doorway section of the code is free of ``wait'' statements, i.e., it can be completed by a process in a bounded number of its own steps.
The waiting room section is where the actual synchronization with other conflicting processes occurs and may entail indefinite waiting.
The notion of doorway was originally introduced by Lamport \cite{Lamp74} in the context of classical mutual exclusion problem.
We would say that process $P_{i}$ {\em doorway precedes} process $P_{j}$, if $P_{i}$ completes the doorway before $P_{j}$ enters the doorway.
Now, the {\em FCFS} property can be formally stated as given below.

\begin{description}
	\item[P5] {\bf FCFS:} If process $P_{i}$ doorway precedes process $P_{j}$ and the two processes request different sessions, then process $P_{j}$ does not enter the critical section before process $P_{i}$.
\end{description}

\subsection{Model}
\label{model}

We consider a system consisting of $N$ processes, named $P_{1}, P_{2} \ldots P_{N}$ and a set of shared variables.
Each process also has its own private variables.
Processes can communicate only by writing into and reading from shared variables.
An execution is modeled as a sequence of process steps.
In each step, a process performs some local computation or reads from a shared variable or writes into a shared variable. 
We assume that these steps of reading or writing are atomic.
The processes take steps {\em asynchronously}. 
Specifically, this means that an unbounded number of steps of some other process could be performed in between two consecutive steps of a process.
We assume that our processes are {\em live}; this means that if a process is active it will eventually execute its next step.

We allow only simple read and write operations on shared variables. We assume that these
read and write operations are atomic. However, we do not assume that processes have access to more powerful synchronization operations such as atomic test-and-set, compare-and-swap etc.

In this paper, we will be working exclusively under the Cache-Coherent (CC) model, a model of practical significance considering that virtually
every modern multi-processor is cache-coherent.  
In the CC model, all shared variables are stored in a global
memory module that is not associated with any particular processor.
Each processor also has a local cache and some
hardware protocol ensures cache coherence
i.e., all copies of the same variable in different local caches
are consistent. Every time a process reads a shared variable,
it does so using a local (cached) copy of the variable. A
local copy of the variable may not be valid, if the process
has never read the variable before or if some process overwrote
it in the global memory module. Whenever
upon reading a cached variable, the process is informed by the system
that its value is invalid, it makes a
remote memory reference and migrates the variable to its
local cache. We assume that once a shared variable is brought
into a process's local cache it remains there indefinitely. Also,
every time a process writes a shared variable, the process
writes the variable in the global memory module, which involves
a remote memory reference. Note that, this action 
invalidates all cached copies of that variable.

   Our time complexity is the number of Remote Memory Reference
(RMR) steps. This is because remote memory
references are the most time consuming operations as they
involve interconnect traversal. Our space complexity is 
the total amount of shared space a solution entails.
We do not count the private variables when measuring the space
complexity.  We consider two different types of shared variables  situations,
one in which the shared variables are bounded registers
and one in which the shared variables are unbounded registers.

\subsection{Our Contribution and Related Work}

   Joung's original algorithm for the GME problem
satisfies the four basic properties. It does not
satisfy the FCFS property. Moreover, it has unbounded
RMR complexity.
Hadzilacos (see \cite{Hadz01}) gave the first solution
for the GME problem that has the FCFS property.
His algorithm can be thought of
as a modular composition of two independent algorithms,
one, ``the FCFS algorithm'' provides FCFS property
(but not necessarily guarantee mutual exclusion) and
the other, ``the ME algorithm'' provides mutual exclusion property
(but not necessarily FCFS). This algorithm has
shared space complexity of $\Theta(N^2)$. It was (mistakenly in hindsight)
claimed that the algorithm has RMR complexity
of $O(N)$ in the CC model. The algorithm was using only bounded 
shared variables and simple read and write operations. It was left as an open problem to develop
a solution (satisfying P1 through P5)
for the GME problem that runs in linear time and
space using only bounded shared variables.

 Subsequently, \cite{JPT03}, Jayanti et al.
presented an algorithm presumed to be of linear time and space,
solving the challenge of Hadzilacos.  Jayanti et al. retained the idea of modular composition
and also the ``ME algorithm'' that Hadzilacos used. They
came up with a clever modification to the ``FCFS algorithm''
of Hadzilacos to reduce the space complexity to $\Theta(N)$.
Unfortunately, though without writing it explicitly,
they inherited from Hadzilacos the mistaken impression 
that the algorithm they use has $O(N)$ time complexity.
\footnote{
For example, the recent paper by Bhatt and Huang \cite{BH10} explicitly
states that the RMR complexity of the algorithm by Jayanti
et al. is $O(N)$.
}

 Both works use a slightly modified version of a classical mutual exclusion algorithm developed
independently by Burns \cite{Burns81} and  Lamport \cite{Lamp86}
as the ``ME algorithm''. This is an elegant algorithm
that uses just one bit of shared space per process.
In Section~\ref{flaw}, we show that this algorithm
actually has an intricate structure and has the
worst case RMR complexity of $\Omega(N^2)$. It follows
that algorithms of both Hadzilacos and Jayanti et al.
for solving GME are of RMR complexity $\Omega(N^2)$. Hence,
the challenge posed by Hadzilacos has, yet, not been met.
Our observation, and part of our contribution, that the challenge is still on, initiated this paper.

  Our first algorithm, presented in Section~\ref{simple},
is a generalization of the classic Lamport's Bakery
Algorithm to solve the GME problem while maintaining its
simplicity and elegance. It uses unbounded registers 
to solve the GME problem (satisfying P1 through P5)
and runs in linear time and space.

Takamura and Igarashi also made an attempt in \cite{TI03} to generalize Lamport’s Bakery Algorithm to solve the GME problem.
They presented three different algorithms in that paper.
However, all of their algorithms satisfy neither the concurrent entry property
nor the FCFS property.

In 2004 (which is in the future of \cite{Hadz01} and \cite{JPT03}), 
Taubenfeld \cite{Tau04} came up with an elegant algorithm
called \emph{Black and White Bakery Algorithm} that
solves the classical mutual exclusion problem with
only bounded shared registers. His approach is a lot
simpler than a prior method that bounds
the registers of the Lamport's Bakery Algorithm developed by Jayanti et al. in \cite{JTFK01}. 
Our second algorithm,
presented in Section~\ref{bwgme} is a generalization
of the Black and White Bakery Algorithm to solve
the GME problem. 
Our algorithm satisfies the properties P1 through P5
and runs in linear time and space using bounded shared registers. 
Thus, our algorithm is the
first one to solve the open problem originally posed
by Hadzilacos, a decade and a half ago.

   We present a comparison of different GME algorithms in Table 1.
To make the comparison fair, we include only those algorithms that solve the problem using only simple
read/write instructions. The first row in the table describes the properties of Joung's original
algorithm for the GME problem. The next four rows compares the algorithms for the GME problem that uses
unbounded shared registers. The last four rows compares the algorithms for the GME problem that uses
bounded shared registers. 

\begin{table}
\begin{center}
\caption{Comparison of GME algorithms that use only simple read/write instructions}

\begin{tabular} {||l||c|c|c|c|c|c|c||} \hline
Algorithm & RMR & Space & P2? & P3? & P4? & P5? & Bounded? \\ \hline \hline

Joung  \cite{Jou00} &  $\infty$ &  O(N) & Yes & Yes & Yes & No & Yes  \\ \hline \hline
Takamura et. al. \cite{TI03} I & $\infty$ & O(N) & No & Yes & No & No & No \\ \hline
Takamura et. al. \cite{TI03} II & O(N) & O(N) & Yes & No & No & No & No \\ \hline
Takamura et. al. \cite{TI03} III & O(N) & O(N) & Yes & No & No & No & No \\ \hline
This work ({\bf GLB Algorithm}) & O(N) & O(N) & Yes & Yes &  Yes & Yes & No \\ \hline \hline
Keane \& Moir \cite{KM99} & O($\log N$) & O(N) & Yes & No & No & No & Yes \\ \hline
Hadzilacos  \cite{Hadz01} & O($N^{2}$) & O($N^{2}$) & Yes & Yes &  Yes & Yes & Yes \\ \hline
Jayanti et. al. \cite{JPT03} & O($N^{2}$) & O(N) & Yes & Yes & Yes & Yes & Yes \\ \hline
This work ({\bf BWBGME Algorithm}) & O(N) & O(N) & Yes & Yes & Yes &  Yes & Yes\\ \hline \hline
\end{tabular}
\end{center}
\end{table}

\section{Generalizing Lamport's Bakery Algorithm}
\label{simple}

In this section, we present a very simple algorithm
for the GME problem by generalizing Lamport's 
Bakery Algorithm (see \cite{Lamp74})
for the classical mutual exclusion problem.
This algorithm is based on the method commonly used in bakeries, 
in which a customer receives a token number upon entering the 
store and the holder of the lowest token number is the next 
one served.  We will refer to it as Generalized Lamport's 
Bakery (GLB) Algorithm.

The algorithm is presented in Figure \ref{GLBAlgo}.
It uses three shared variables.
The first one \emph{Session} is an integer array of size $N$ and \emph{Session}$[i]$ 
indicates the session number that process $P_{i}$ requests in the current invocation
to enter the CS.
The second one is \emph{Token}, an integer array of size $N$ and \emph{Token}$[i]$ represents the token number selected by process $P_{i}$.
The third shared variable is \emph{Choosing}, a boolean array of size $N$ and \emph{Choosing}$[i]$ is true would indicate that process $P_{i}$ is currently attempting to make a new request to enter the critical section.
The \emph{Session} array and the \emph{Token} array are initialized to zero and the \emph{Choosing} array is initialized to false.
The doorway of the algorithm consists of lines 3-6 and the waiting room is made up of lines 7-10.

When a process leaves the remainder section, it first sets the variable \emph{Choosing}$[i]$ to true to signal to other processes that it is about to make a new request to enter the critical section.
Next, it places the desired session number in \emph{Session}$[i]$.
We assume that all session numbers are positive integers.
It then selects its token number to be one more than the maximum of the token numbers of all other processes and places it in \emph{Token}$[i]$.
Finally, $P_{i}$ sets \emph{Choosing}$[i]$ to false to signal to other processes that it
has completed the doorway section for the new request.

\begin{algorithm}
\caption{Generalized Lamport's Bakery Algorithm}
\label{GLBAlgo}
\begin{algorithmic}[1]
          \REQUIRE ~~ 

              \emph{Session}: \textbf{array}$[1..N]$ of \textbf{integer}, initially all $0$ 

              \emph{Token}: \textbf{array}$[1..N]$ of \textbf{integer}, initially all $0$

              \emph{Choosing}: \textbf{array}$[1..N]$ of \textbf{boolean}, initially all false

            \medskip

          \ENSURE ~~ \\
            \emph{mysession}: \textbf{integer}, initially $0$

        \medskip

       \LOOP

                \STATE \textbf{REMAINDER SECTION}

            \medskip

            \STATE $\emph{Choosing}[i] := \TRUE$

            \STATE $\emph{Session}[i] :=  \emph{mysession}$

            \STATE $\emph{Token}[i] := 1 $ + max of other token numbers

            \STATE $\emph{Choosing}[i] := \FALSE$

            \medskip

            \FOR {$j := 1$ \TO $N$}

                \STATE \textbf{wait until} $((\emph{Choosing}[j] =  \FALSE) \lor (\emph{Session}[j] \in \{0, \emph{mysession}\}))$

                \STATE \textbf{wait until} $(((\emph{Token}[i], i) < (\emph{Token}[j],j)) 
\lor (\emph{Token}[j] = 0) \lor$ \\ \ \ \ \ \ \ \ \ \ \ \ \ \ \ \ \     $(\emph{Session}[j] \in \{0, \emph{mysession}\}))$

            \ENDFOR

            \medskip

            \STATE \textbf{CRITICAL SECTION}

            \medskip

            \STATE $\emph{Token}[i] := 0$

            \STATE $\emph{Session}[i] := 0$

        \ENDLOOP

    \end{algorithmic}
\end{algorithm}

 In the waiting room, at line 8, for each other process $P_{j}$,
process $P_{i}$ checks to see whether \emph{Session}$[j]$ is zero
or same as {\em mysession}. In either case, there is no
problem and so $P_{i}$ can move on. On the other hand,
if process $P_{j}$ is requesting a conflicting session
(i.e., \emph{Session}$[j]  \not \in \{0, $\emph{mysession}$\}$),
$P_{i}$ waits for $P_{j}$ to
finish the doorway and set \emph{Choosing}$[j]$ to be false.
Next, at line 9, process $P_{i}$ waits on each conflicting process $P_{j}$
(i.e., \emph{Session}$[j]  \not \in \{0, $\emph{mysession}$\}$)
until either $P_{j}$ has exited the critical section (i.e., \emph{Token}$[j] = 0$)
or it has a larger token number (i.e., $((\emph{Token}[i], i) < (\emph{Token}[j],j))$).
It is possible that two conflicting processes pick the same token number.
In that case, we use the process identifier to resolve the ties (line 9).
The relation ``less than'' among ordered pairs of integers is defined as in the Lamport's Bakery Algorithm.
That is $(a, b) < (c, d)$ if $a < c$ or if $a = c$ and  $b < d$ (a, b, c and d are all integers).
For simplicity, if \emph{Token}$[i]$ = \emph{Token}$[j]$, we say the process with the smaller process identifier has the smaller token number.
After the loop, process $P_{i}$ enters the CS.
In the exit section, $P_{i}$ resets  \emph{Token}$[i]$ to $0$
and then \emph{Session}$[i]$ to $0$.

It is easy to see that process $P_{i}$
enters the critical section if it has the smallest
token number among all conflicting processes. This
observation immediately implies that the algorithm
has the FCFS property and the mutual exclusion
property. It is trivial to notice that the algorithm
has bounded exit property as the exit section is
made up of two simple write statements. It is also easy
to see that  this algorithm satisfies
the concurrent entry property as no process waits
on another process with the same session number. This algorithm is
also deadlock free as there cannot be a circular
wait among processes because one process must have the smallest token number.
Any algorithm that is
both deadlock free and has FCFS property, also
has starvation freedom property and so we can
conclude that this algorithm has
starvation freedom property. Thus this algorithm
has all the five desired properties P1 through P5.
It is not too difficult to see that this algorithm has
$O(N)$ RMR complexity and is trivial to note that
it is of $O(N)$ shared spaced complexity. 
We summarize the results in the form of the following
theorem. A more formal presentation of the proof is available
in Appendix I.

\begin{thm}
\label{GLBcorrect}
The Generalized Lamport's Bakery Algorithm presented in Fig.~\ref{GLBAlgo} solves
the GME problem by satisfying all the five properties P1 through P5 in linear space (with unbounded registers) and time (RMR under the CC model) using only simple read and 
write operations.
\end{thm}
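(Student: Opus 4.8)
The plan is to mirror Lamport's classical correctness proof while carrying the session field through every step, and to treat the safety/liveness properties and the RMR bound separately. I would first isolate the one structural fact that everything rests on, a \emph{doorway-ordering lemma}: if $P_i$ doorway precedes $P_j$ and the two request conflicting sessions, then $\emph{Token}[i] < \emph{Token}[j]$ for as long as both tokens are nonzero. This is forced by line~5, where $P_j$ sets its token to one more than the maximum it reads; since $P_i$ has already written $\emph{Token}[i]$ before $P_j$ begins its doorway, $P_j$ necessarily reads it and picks a strictly larger value. The only role of \emph{Choosing} (set on lines~3 and~6) is to rule out the reader observing a half-chosen or stale token, and I would discharge that with the standard argument that the wait on line~8 forces an inspecting process, when it later reads at line~9, to see either a settled token or a non-conflicting session.

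Next I would prove the GME analogue of Lamport's key invariant: whenever $P_i$ is in the critical section and a conflicting $P_j$ has completed its doorway and not yet exited, $(\emph{Token}[i], i) < (\emph{Token}[j], j)$. The proof examines the iteration of $P_i$'s loop with index $j$. Passing line~8 guarantees $P_i$ read $\emph{Choosing}[j]$ to be false or saw a non-conflicting $\emph{Session}[j]$; I would then split on whether $P_j$ had already fixed its current token when $P_i$ read $\emph{Token}[j]$ on line~9. If it had, the conflicting, nonzero session of $P_j$ kills the two escape branches and leaves only $(\emph{Token}[i],i) < (\emph{Token}[j],j)$; if it had not, then $P_j$ chose after $P_i$ finished its own doorway and the doorway-ordering lemma gives the inequality. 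The GME-specific subtlety is that $\emph{Session}[j]$ can change between $P_i$'s read and the present, so I must argue that any such change is a fresh choice by $P_j$ that again lands it behind $P_i$. Mutual exclusion (P1) is then immediate: two conflicting processes simultaneously in the critical section would each have the strictly smaller ordered pair, a contradiction. First-come-first-served (P5) follows the same way from the doorway-ordering lemma, since a later-arriving conflicting $P_j$ can clear line~9 on index $i$ only via $\emph{Token}[i]=0$, i.e. only after $P_i$ has already been in and left the critical section.

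The remaining properties are lighter. Bounded exit (P3) is immediate because lines~11--12 are two writes. Concurrent entry (P4) holds because, in the absence of conflicting processes, every $\emph{Session}[j]$ stays in $\{0, \emph{mysession}\}$, so both line~8 and line~9 clear through their session disjuncts with no waiting and $P_i$ finishes the loop in $2N$ reads. For starvation freedom (P2) I would first show deadlock freedom by considering the active process with the least $(\emph{Token},\mathit{id})$ pair: it cannot be blocked at line~8 (the process it waits on finishes its doorway) nor at line~9 (every conflicting competitor has a larger pair or a zero token), so it enters the critical section; deadlock freedom together with P5 then yields P2, invoking the stated fact that a deadlock-free FCFS algorithm is starvation-free. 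The $O(N)$ space bound is trivial, as only the three size-$N$ arrays are shared.

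The hardest part will be the $O(N)$ RMR bound, which is where stale intuition about the bakery has misled prior work. The claim I would prove is that, for each fixed $j$, $P_i$ incurs only $O(1)$ remote memory references across its line~8 and line~9 waits on $P_j$ in the CC model, so that a single passage through the entry section costs $O(N)$. For line~8, $P_i$ spins on $\emph{Choosing}[j]$ and $\emph{Session}[j]$, which $P_j$ writes only inside a doorway; if $P_i$ is momentarily blocked then $\emph{Choosing}[j]$ is true, and the very next write to it is the false on line~6 that unblocks $P_i$, so at most a constant number of invalidations occur before $P_i$ proceeds. For line~9 on a conflicting $P_j$, being blocked means $P_j$ holds the strictly smaller pair, so $P_i$ must wait for $P_j$ to exit; the crux is that once $P_j$ resets $\emph{Token}[j]$ to $0$ on line~11, the next value $P_i$ reads is either that $0$ or a fresh token which, by the doorway-ordering lemma, already exceeds $\emph{Token}[i]$ --- in both cases a passing condition --- so $P_i$ cannot be driven through a second reset and observes only $O(1)$ invalidations. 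Summing the $O(1)$ per index over $j = 1 \ldots N$ gives the $O(N)$ RMR complexity, and pinning down exactly these invalidation counts, rather than the number of \emph{reads}, is the delicate accounting the theorem really turns on.
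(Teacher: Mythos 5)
Your proposal is correct and follows essentially the same route as the paper's Appendix~I proof: FCFS from the doorway-ordering (token monotonicity) argument, mutual exclusion by the Lamport-style case analysis on whether the conflicting process had committed its session/token when inspected, bounded exit and concurrent entry by inspection, starvation freedom via deadlock freedom (smallest token enters) plus FCFS citing Lamport, and the $O(N)$ RMR bound by the same per-index $O(1)$ accounting in which FCFS guarantees that any re-invocation of $P_j$ lands behind $P_i$ and hence cannot generate further invalidations. The only cosmetic difference is that you package the mutual-exclusion argument as an explicit invariant and split cases at the line~9 read rather than the line~8 read, which is the same argument in different clothing.
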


An earlier paper by Takamura and Igarashi \cite{TI03} also made an attempt to generalize
Lamport's Bakery Algorithm to solve the GME problem.
They presented three different algorithms in
that paper. 
Their first algorithm is fairly simple, but does not satisfy
the starvation freedom property.
Their second and third algorithms, apart from being quite complicated,
do not satisfy the  bounded exit property. All three of their algorithms
do not satisfy the FCFS property  and also the concurrent entry property. 
However, all three of their 
algorithms satisfy a weaker property
known as {\em concurrent occupancy} in the literature 
(see \cite{KM99} and \cite{Hadz01}).
To the best of our knowledge, our algorithm shown
in Fig.~\ref{GLBAlgo} is the simplest and
most elegant generalization of Lamport's Bakery
Algorithm for the GME problem.

\section{A Flaw in the Literature}
\label{flaw}

   We now look at attempts to solve the GME problem
using only bounded shared registers and simple read and write
operations. Two prominent algorithms in this regard are
that of Hadzilacos \cite{Hadz01} and that of 
Jayanti et al. \cite{JPT03}. Both of the algorithms
can be viewed as a modular composition of an ``FCFS Algorithm'' and an ``ME Algorithm''. 

Assume that processes $P_{i}$ and $P_{j}$ are two active
processes requesting different sessions.
It is possible that 
neither $P_{i}$ doorway precedes $P_{j}$, nor $P_{j}$ doorway
precedes $P_{i}$. This will be the case if one process enters its
doorway after another process has entered, but not completed, its doorway.
In that case, we would call the two processes as {\em doorway concurrent}
processes. It is to be noted that the FCFS property does not dictate
as to which one of two {\em doorway concurrent} processes should enter
the CS first.
In fact, both processes can get out of the ``FCFS
algorithm'' and can potentially enter the CS
simultaneously. In order to prevent this from
happening, we need the ``ME Algorithm''. 

Not only that both of the algorithms use modular
composition technique, they both use the same ``ME Algorithm'' 
viz., a slightly modified version of a classical mutual exclusion algorithm independently discovered by Burns \cite{Burns81} 
and Lamport \cite{Lamp86}.  Hadzilacos's algorithm is
of space complexity $\Theta(N^{2})$ and he claimed that his
algorithm is of $O(N)$ RMR complexity in the CC Model.
Hadzilacos posed it as an open problem to devise 
an algorithm to solve the GME problem in linear time
and space using only bounded shared variables.
Jayanti et al. came up with a clever modification to the
``FCFS algorithm'' of Hadzilacos and reduced 
the space complexity to $\Theta(N)$ and tacitly
inherited the claim that algorithm is of 
$O(N)$ RMR complexity.

We prove that the ``Burns-Lamport ME Algorithm'' is of RMR complexity $\Omega(N^2)$ in the CC Model thus invalidating their claims.
To give a intuition of the analysis, consider $N$ doorway concurrent processes request the CS, we can construct a scenario in which each process $P_{i}$ will be blocked by any other process $P_{j}$ exactly $j$ times (for any $j<i$) before $P_{i}$ enters the CS.
Therefore, process $P_{N}$ will be blocked exactly $\frac{N(N-1)}{2}$ times before it enters the CS.
As each block involves a constant number of RMR, the worst case RMR complexity is $\Omega(N^2)$.
The formal proof the complexity analysis of the ``Burns-Lamport ME Algorithm'' appears in the Appendix II.
Hence the problem of developing a linear time (RMR) and
linear (shared) space algorithm that uses only bounded
shared variables for the GME
problem, originally posed by Hadzilacos, is
still open.

One might get the impression that we can immediately fix
the problem, by plugging in some other mutual exclusion
algorithm that has $O(N)$ RMR Complexity 
in place of Burns-Lamport  Mutual Exclusion Algorithm.
Unfortunately, the situation is not that simple as
we have to adapt the ME algorithm so that it provides
concurrent entry for application in the development of GME
algorithm using the modular composition technique.
The Lamport-Burns algorithm is easy to adapt by
simply adding an extra condition to check whether 
the session number of the other process is same as
the session number of this process
in all wait-until loops.
On the other hand, the mutual exclusion algorithms  Taubenfeld \cite{Tau04}
which has $O(N)$ RMR Complexity or that of
Yang and Anderson \cite{YA95} which has $O(\log N)$ RMR
Complexity, are not easily adaptable to provide
concurrent entry. Simply adding an extra condition to
check the session number in all wait-until loops in these
algorithms does not work as they have more intricate structure.
We are unable to find a mutual exclusion algorithm 
of $O(N)$ RMR complexity in the literature, that is easily 
adaptable to provide concurrent entry.
So, the problem of developing a linear time and linear space
GME algorithm that uses only bounded registers is indeed a
non-trivial problem. In the next section, we develop
such an algorithm.

\section{Black and White Bakery GME Algorithm}
\label{bwgme}

In 2004, Taubenfeld \cite{Tau04} came up with an elegant algorithm
called {\bf Black and White Bakery Algorithm} that
solves the classical mutual exclusion problem with
only bounded shared registers. In this section, we generalize
the ideas developed in that paper and solve the GME 
problem using bounded registers in linear time and
space. 

The algorithm uses a multi writer multi reader shared
bit variable called {\em GlobalColor} (see Figure~\ref{header-black-white-GME})
which can only be
black or white. All other shared variables used
in the algorithm can only be written by one process
even though they can be read by multiple processes.
Each process uses a shared variable called {\em Token} that has
three components viz., \emph{session}, \emph{color} and \emph{number}.
We assume that processes can read or write into this \emph{Token} variable
atomically even though it has three components. This is not
an unreasonable assumption as this can be implemented
without the aid of any higher level synchronization primitives
by encoding three integers into a single integer using
simple techniques (which we are not elaborating further here).
Finally each process also has a boolean shared variable called
{\em Choosing}.  Also, unlike the \emph{GlobalColor} variable, the
token color of a process can be black or white or a special
value denoted by $\perp$ which indicates that the process
has not yet set its token color.

The algorithm is depicted in Figure~\ref{BWGMEAlgo}
and we will refer to it as the Black and White Bakery GME
(BWBGME) Algorithm.
The doorway of the algorithm is made up of lines 3-15 and the waiting room section consists of lines 16-23.
When a process leaves the remainder
section, it picks a session number and then updates
its {\em Token} variable to reflect it. It then sets
its {\em Choosing} variable to be true to indicate to
other processes that it has initiated the task of picking
a token. The key difference this time is that the
token is a colored token i.e., the token
has both a color and a number. The token color is set to
be same as the current value of the shared variable {\em
GlobalColor}. The token number is set to
be one more than the maximum of token numbers of
conflicting processes with the same color and is set to
be $1$ in case there are no conflicting processes with
the same color. The process then updates its {\em Token}
variable to reflect the chosen color and number (line 14). It then
sets its {\em Choosing} variable to be false to indicate
to other processes that it is done with the task of
picking a colored token (line 15).

\begin{algorithm}[b]
    \caption{Header for Black and White Bakery GME Algorithm in Fig.~\ref{BWGMEAlgo}}
    \begin{algorithmic}[1]
    \label{header-black-white-GME}

       \REQUIRE ~~ 

        \emph{GlobalColor}: a \textbf{bit} of type \{\textbf{black}, \textbf{white}\}, initialized arbitrarily

        \emph{Token}: \textbf{array}$[1..N]$ of 
        (\emph{session}: \textbf{integer}, \emph{color} $\in \{\textbf{black}, \textbf{white},\perp\}$, \emph{number}: \textbf{integer}), initially all $( 0,$ $\perp,$ $0)$ 

          \emph{Choosing}: \textbf{array}$[1..N]$ of \textbf{boolean}, initially all false

        \medskip

        \ENSURE ~~ 

            \emph{mysession}: \textbf{integer}, initially $0$

            \emph{mycolor}: a \textbf{bit} of type \{\textbf{black}, \textbf{white}\}, initialized arbitrarily

            \emph{mynumber}: \textbf{integer}, initially $0$

            \emph{other}:  (\emph{session}: \textbf{integer}, \emph{color} $\in \{\textbf{black}, \textbf{white},\perp\}$, \emph{number}: \textbf{integer}), initially  $( 0,$ $\perp,$ $0)$ 

    \end{algorithmic}
\end{algorithm}

\begin{algorithm}
    \caption{Black and White Bakery GME Algorithm}
    \begin{algorithmic}[1]
    \label{BWGMEAlgo}
        \LOOP

        \STATE \textbf{REMAINDER SECTION}

        \medskip

        \STATE $\emph{Token}[i] := (\emph{mysession} , \perp, 0)$

        \STATE $\emph{Choosing}[i] :=  \TRUE$

        \STATE $\emph{mycolor} :=  \emph{GlobalColor}$

        \STATE $\emph{mynumber} := 0$

        \FOR {$j := 1$ \TO $N$}

            \STATE $\emph{other} :=  \emph{Token}[j]$

            \IF{$((\emph{other.color} = \emph{mycolor} )\land (\emph{other.session} \not\in \{ 0, \emph{mysession}\}))$}

                \STATE  $\emph{mynumber} := $ max$(\emph{other.number}, \emph{mynumber})$
            \ENDIF

        \ENDFOR

        \STATE $\emph{mynumber} :=\emph{mynumber} + 1 $

        \STATE $\emph{Token}[i]:=  (\emph{mysession}, \emph{mycolor}, \emph{mynumber})$

        \STATE $\emph{Choosing}[i] :=  \FALSE$

        \medskip

        \FOR {$j := 1$ \TO $N$}

          \STATE \textbf{wait until}
                $((\emph{Choosing}[j] =  \FALSE) \lor
                (\emph{Token}[j].\emph{session} =  \emph{mysession}))$

          \IF {$\emph{Token}[j].\emph{color}  =  \emph{mycolor}$}

              \STATE \textbf{wait until}
                    $(( (\emph{mynumber}, i) < (\emph{Token}[j].\emph{number}, j)) \lor
                    (\emph{Token}[j].\emph{color} \neq  \emph{mycolor}) \lor $\\
                    \ \ \ \ \ \ \ \ \ \ \ \ \ \ \ \ \
                    $(\emph{Token}[j].\emph{session} \in\{0, \emph{mysession}\}))$

            \ELSE
        \STATE \textbf{wait until}
                    $((\emph{GlobalColor} \neq \emph{mycolor}) \lor
                    (\emph{Token}[j].\emph{color} =  \emph{mycolor}) \lor $\\
                    \ \ \ \ \ \ \ \ \ \ \ \ \ \ \ \ \ 
                    $(\emph{Token}[j].\emph{session} \in\{0, \emph{mysession}\}))$

            \ENDIF

        \ENDFOR

        \medskip

        \STATE \textbf{CRITICAL SECTION}

        \medskip

        \IF {$\emph{mynumber} \neq 1$}
            \IF {$($\textbf{not} OPPOSITECOLOR$(\emph{mycolor}))$}
                \IF{$\emph{mycolor}  =  \textbf{black}$}
                    \STATE $\emph{GlobalColor} :=  \textbf{white}$
                \ELSE
                    \STATE $\emph{GlobalColor} :=  \textbf{black}$
                \ENDIF
            \ENDIF
        \ENDIF

        \STATE $\emph{Token}[i]:= ( 0,\perp,0)$

    \ENDLOOP

\end{algorithmic}
\end{algorithm}

\begin{algorithm}
    \caption{Method OPPOSITECOLOR$(\emph{color})$ for Black and White Bakery GME Algorithm in Fig.~\ref{BWGMEAlgo}}
    \begin{algorithmic}[1]
    \label{method-color-black-white-GME}

        \FOR {$j := 1$ \TO $N$ }

        \STATE $\emph{other} := \emph{Token}[j]$

            \IF {$((\emph{other}.\emph{session} \neq 0) \land (\emph{other}.\emph{color} = \overline{\emph{color}}))$}

                \RETURN \TRUE

            \ENDIF

        \ENDFOR

        \RETURN \FALSE

    \end{algorithmic}
\end{algorithm}

In the waiting room, for each other process $P_{j}$, process $P_{i}$ checks to see
if it is an active conflicting process. If it is not,
then there is no problem and $P_{i}$ does not wait
on $P_{j}$. If it is, then process $P_{i}$ waits until
$P_{j}$ has completed selecting its token color and
number, if it has initiated the task (line 17). Process $P_{i}$ then
checks whether it has priority over $P_{j}$ (lines 18-22).
If so, it does not wait on process $P_{j}$ and otherwise
it waits on $P_{j}$.
The priority order between conflicting processes 
is defined as follows. If two conflicting processes have different token colors,
the process whose token color is different from the {\em GlobalColor}
gets the priority. On the other hand, if two conflicting processes have the
same token color, the process with the smaller token
number gets the priority.  If two conflicting processes have the same token color 
and number, then the process with the smaller identifier 
gets priority. Process $P_{i}$ enters the CS if it has priority over all other conflicting processes.

At the time of exiting, process $P_{i}$ checks whether its token
number is $1$ (line 25). If so, it just resets its {\em Token} variable
to the initial value and exits. If not, it checks whether
there is an active process (not necessarily a conflicting process) with the {\em opposite token color} (see Figure~\ref{method-color-black-white-GME}).
The {\em opposite token color}, denoted by  $\overline{color}$, is defined to be \emph{black} 
if $\emph{color}= $ white, and vice versa.
If so, it just resets its {\em Token} variable to the
initial value and exits. If not, then it updates the {\em GlobalColor} to the opposite
of its token color and then resets its {\em Token} variable to
the initial value.  In particular, note that token color is reset to $\perp$.  
This is important to ensure that other processes do not erroneously
use a process's old token color while determining the priority.

The generalization of the Black and White Bakery Algorithm to solve the
GME problem is quite tricky. While the formal proof of correctness can be found
in the Appendix III of this paper, we provide some main insights into our generalization
here.
In the original Black and White Bakery Algorithm, when a process is attempting to
enter the CS, it selects its token color as the current global color
and its token number to be one more than the maximum of token numbers of processes with the
same color. When it exits the CS, it simply updates the \emph{GlobalColor} to be the opposite of its own token color.

A naive generalization of Black and White Bakery Algorithm would simply add an additional check 
to see whether the other process is a conflicting process in all busy-wait loops. 
This naive generalization, nevertheless, can not ensure the correctness in the case of
group mutual exclusion. In the GME, as processes with the same session and 
different token colors can be in the CS at the same time, if a process leaving 
the CS simply updates the \emph{GlobalColor} as before, it may erroneously allow conflicting processes to stay in the CS simultaneously.

Consider the following scenario.  Initially, the \emph{GlobalColor} is white.
Processes $P_{i}$ and $P_{j}$ request the session $S$ and get the token color of white.
Then, $P_{i}$ and $P_{j}$ enter the CS concurrently because there is no conflict.
When $P_{i}$ is exiting, it sets the \emph{GlobalColor} to black. 
Next, a process $P_{k}$ requests the same session $S$ and gets its token color of black.
It is easy to see that $P_{k}$ enters the CS by the concurrent entry property.
After that, a process $P_{l}$ starts to request a conflicting session $S'$ and gets the token color of black.
In the waiting room, $P_{l}$ waits for $P_{j}$ since they have different token color and the \emph{GlobalColor} is black.
However, if we let $P_{k}$ exit the CS and then simply set the \emph{GlobalColor} to white, $P_{l}$ will stop waiting for $P_{j}$ as it sees the \emph{GlobalColor} is different than its token color.
Hence, two conflicting processes $P_{j}$ and $P_{l}$ will be in the CS simultaneously, thus violating the mutual 
exclusion property.

      To get a correct generalization, we observed a key invariant of the 
original Black and White Bakery Algorithm.  


\begin{inv}
After a process $P_{i}$ gets its token color from the GlobalColor, the GlobalColor can not
be flipped twice, before the process $P_{i}$ finishes the critical section and gets out of the exit section.
\end{inv}

In the original Black and White Bakery Algorithm, supposing a process $P_{a}$ gets a token color of
black and after some time the \emph{GlobalColor} gets changed to white by some process $P_{b}$. Now,
in order for some other process $P_{c}$ to change it again to black, $P_{c}$ must be a white
process. However, $P_{c}$ can change it only while it exits. In order to exit, $P_{c}$ must
first enter the critical section. As \emph{GlobalColor} is currently white 
and $P_{a}$ has a color of black, $P_{a}$ will have the higher priority over $P_{c}$ to
enter the critical section. So, the \emph{GlobalColor} cannot be changed again until $P_{a}$
gets out completely.

The fundamental idea in generalizing the Black and White Bakery Algorithm is to ensure that this invariant is maintained. It is not difficult to see that doing so, solves the mutual exclusion violation illustrated in the previous scenario.
Although processes $P_{j}$ and $P_{k}$ with different token color stay in the CS at the same time, the
\emph{GlobalColor} will not be updated when $P_{k}$ executes the exit section (as otherwise the \emph{GlobalColor} is flipped twice since $P_{j}$ got its token color and before it exits) and therefore,
$P_{l}$ will still wait for $P_{j}$ until it exits.

In order to maintain this invariant in the generalization, when a process finished the CS, 
we let it check whether there is another active process with the opposite token color.
If there exists such a process, then the exiting process does not update the \emph{GlobalColor}.
Otherwise, the process updates the \emph{GlobalColor} to be the opposite color of its own token color.
Also, while a process is checking this condition, it may unintentionally access the old 
token color of another process even though that process already finished the 
previous invocation.  To prevent this from happening, we let processes reset their 
token color to empty $(\bot)$ at the end of the exit section.

     However, only the new \emph{GlobalColor} updating mechanism is not enough to keep this invariant.
A process in the doorway may read the (opposite) \emph{GlobalColor} and then stop before writing it 
to its token color.  Hence, another process in the exit section would have no clue as to which color this process will get
(it will see a color of $\bot$ for this process). It could erroneously think there is no active processes with
opposite token color and flip the \emph{GlobalColor} (while in fact there is an active process with the
opposite color).  We can devise an intricate sequence of execution to show that the key invariant 
does not hold if there is such a process that stopped just before writing down its token color in the doorway.

In order to handle this, we use different scheme for token number picking (lines 7-12) and 
add another condition to check before updating the \emph{GlobalColor} (line 25).
When a process is picking a token number, it determines the maximum of token numbers of conflicting processes with the same token color and then increments it by 1.
If there is no such process, it selects its token number to be 1.
At the time of exiting, a process does not even attempt to update the \emph{GlobalColor} if its token
number is 1. On the other hand, if its token number is 2 or more, then it attempts to update the
\emph{GlobalColor} (it actually does if there is no active process with the opposite token color).

These changes are necessary to solve the issue of overlooking a process with the opposite token color having a ``$\perp$'' value.
Suppose that a hanging process $P_{h}$ has read a value of black for \emph{GlobalColor} and has
not yet written into its token color variable. Suppose that some other process changed the
\emph{GobalColor} from black to white after some time. Later, if a process $P_{u}$ starts to execute the algorithm and  attempts to
change the \emph{GlobalColor} again, then $P_{u}$ must be a white-colored process with a token number
greater or equal to 2.
In order for process $P_{u}$ to get a token number of $2$ or more, there must exist an active conflicting white-colored process $P_{v}$ with a smaller token number.
Clearly, one of $P_{u}$ and $P_{v}$ must be in conflict 
with the hanging process $P_{h}$ because either $P_{u}$ or $P_{v}$ has a different session with $P_{h}$.  Therefore, at least one of $P_{u}$ or $P_{v}$ would 
have waited (in line 17) for $P_{h}$ to finish the token selection before it entered the CS because $P_{h}$ is hanging at the doorway when $P_{u}$ and $P_{v}$ enter the waiting room.
Whichever be the case, as $P_{u}$ can enter the CS only after $P_{v}$ 
has left the CS, we can conclude that the ``hanging process'' $P_{h}$ 
has really written down the read \emph{GlobalColor} to its token variable by the time $P_{u}$ is checking $P_{h}$'s token color in the exit section. 
This shows that the scenario that we mentioned before cannot possibly occur anymore.
A formal proof that our generalization maintains this key invariant is available in the Appendix III 
(see Lemma \ref{lemma3}). 

The token numbers used in our algorithm can not grow beyond $N+1$.
To give an intuition of the bound, consider the following example.
Initially, the \emph{GlobalColor} is white.
Suppose all processes request different sessions and enter the doorway one by one.
The first process will get a token color of white and a token number of $1$.
Each other process will get the same token color of white and progressively
increasing token numbers (by incrementing the token number of the previous process).  
Obviously, the last process gets the token number of $N$.  The process with the token number of 
$1$ will exit without updating the \emph{GlobalColor}.  If that process requests 
a conflicting session again,
it will get the token number of $N+1$.  When a process with the token 
number of $2$ or more is exiting, it will set the \emph{GlobalColor} to black as 
no process has the opposite token color.  So, later processes will have the 
token color of black and will pick the token number starting from $1$.
A formal proof of this fact is available in Appendix III (see Theorem \ref{bwbound}).
Note that although the session variables are unbounded,
it is due to the application
and is not an artifact of the algorithm.

It is trivial to observe that the algorithm uses $O(N)$ shared space.
A complete proof of the $O(N)$ RMR complexity is available in Appendix III (see Theorem \ref{bwcomp}).
We summarize the result stating the properties of our
algorithm in the following theorem. 

\begin{thm}

   The BWBGME Algorithm presented
in Fig.~\ref{BWGMEAlgo} solves the GME problem by satisfying
all the five properties P1 through P5 in linear space (with
bounded registers) and time (RMR under the CC model) using
only simple read and write operations.

\end{thm}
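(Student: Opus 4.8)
The plan is to reduce the entire correctness argument to the single key Invariant stated above, namely that once a process $P_i$ has copied \emph{GlobalColor} into its token color, \emph{GlobalColor} cannot flip twice before $P_i$ leaves the exit section. Several of the five properties are essentially immediate, and I would dispatch them first. \textbf{Bounded Exit} (P3) holds because the exit section (from line 25 onward), together with a single pass of the \textsc{OppositeColor} routine of Fig.~\ref{method-color-black-white-GME}, contains no wait statement and runs in $O(N)$ of the process's own steps. \textbf{Concurrent Entry} (P4) holds because each wait condition in the waiting room is satisfied as soon as the other process is seen to be non-conflicting: lines 17, 19 and 22 all contain the disjunct \emph{session}~$\in$~\{0, \emph{mysession}\}, so in the absence of conflicting processes $P_i$ falls through the loop in a bounded number of steps. \textbf{Starvation Freedom} (P2) I would obtain exactly as for the GLB algorithm: first argue deadlock freedom (among any set of mutually conflicting active processes the priority order is total, so some process is unblocked), then combine deadlock freedom with FCFS to conclude starvation freedom.

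The hard part, and the step on which mutual exclusion and FCFS both rest, is establishing the Invariant itself. I would fix a process $P_a$ that read color $X$ (so \emph{GlobalColor} $=X$ at that instant), assume a first flip $X\to\overline X$ has occurred, and show a second flip $\overline X\to X$ is impossible while $P_a$ is active. A second flip must be performed on exit by some $\overline X$-colored process $P_c$ with token number at least $2$. If $P_c$ conflicts with $P_a$, then since \emph{GlobalColor} $=\overline X$ and $P_a$ carries color $X\neq\overline X$, the different-color priority rule gives $P_a$ priority over $P_c$, so $P_c$ is blocked at line 22 and cannot reach its exit until $P_a$ is gone. If $P_c$ does not conflict with $P_a$, then when $P_c$ runs \textsc{OppositeColor}$(\overline X)$ it finds the still-active $X$-colored $P_a$ and therefore declines to flip. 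The delicate case is a process $P_h$ ``hanging'' in the doorway after reading \emph{GlobalColor} but before writing its token color (still $\perp$), which could cause $P_c$ to overlook it; here I would reproduce the argument sketched in the text: any $P_c$ attempting the second flip has number $\ge 2$, hence there is a conflicting same-color $P_v$ with smaller number, and at least one of $P_c,P_v$ conflicts with $P_h$ and thus waited at line 17 for $P_h$ to finish choosing, forcing $P_h$'s color to be written before $P_c$ inspects it. This closes the Invariant, and I expect it to be the main obstacle.

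With the Invariant in hand, \textbf{Mutual Exclusion} (P1) follows from a case analysis showing the priority relation is consistent for any two simultaneously-in-CS conflicting processes $P_i,P_j$. If they share a color, the smaller-number (tie: smaller id) rule is a strict total order, so they cannot each have priority over the other. If their colors differ, say $P_i$ black and $P_j$ white, then $P_i$ could only have passed line 22 with \emph{GlobalColor} white and $P_j$ only with \emph{GlobalColor} black; since their active lifetimes overlap, this forces \emph{GlobalColor} to flip at least twice during the lifetime of whichever picked its color first, contradicting the Invariant. \textbf{FCFS} (P5) I would prove by the same dichotomy: if $P_i$ doorway-precedes the conflicting $P_j$ and is still active, then in the same-color case $P_j$'s number-picking loop (lines 7--12) reads $P_i$'s already-written token and takes a strictly larger number, giving $P_i$ priority; in the different-color case the Invariant guarantees \emph{GlobalColor} equals $P_j$'s color throughout $P_i$'s active lifetime, so the different-color rule again awards priority to $P_i$. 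In both cases $P_j$ waits for $P_i$.

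Finally I would treat the quantitative claims. For the token bound I would show that at any instant the numbers held by the currently active processes of a fixed color are bounded by the number of such processes, which is at most $N$, and that the only way to reach $N+1$ is the re-entry scenario described in the text; the Invariant prevents a color ``generation'' from living long enough to exceed this, so every register value stays at most $N+1$ and the registers are bounded. For the $O(N)$ RMR bound under the CC model, I would argue that each of the three wait statements for a fixed $j$ incurs only $O(1)$ remote references: a process spins on a cached copy and re-reads remotely only when the watched variable is overwritten, the busy-wait conditions become permanently true after a bounded number of writes by $P_j$, and crucially the spin on \emph{GlobalColor} at line 22 triggers at most one invalidation because by the Invariant \emph{GlobalColor} changes at most once during $P_i$'s wait. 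Summing the $N$ loop iterations and the single $O(N)$ pass of \textsc{OppositeColor} yields $O(N)$ RMRs, and $O(N)$ shared space is immediate from the three length-$N$ arrays.
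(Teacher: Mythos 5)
Your high-level decomposition matches the paper's (establish the key invariant, derive FCFS and mutual exclusion from it, get starvation freedom from deadlock freedom plus FCFS, treat bounded exit and concurrent entry directly), but your proof of the invariant itself has a genuine gap: as written, it is circular. In your ``hanging process'' case you need that $P_c$ and its smaller-numbered same-color partner $P_v$ entered their waiting rooms \emph{after} $P_h$ set \emph{Choosing} to true, so that whichever of them conflicts with $P_h$ blocks at line 17. That temporal ordering holds only if $P_c$ and $P_v$ read \emph{GlobalColor} at line 5 \emph{after} the first flip, i.e., in the current $\overline{X}$-interval. Nothing in your argument rules out that $P_c$ (or $P_v$) acquired the color $\overline{X}$ in an \emph{earlier} interval in which \emph{GlobalColor} was also $\overline{X}$ --- for instance, $P_c$ could have been inside the CS since before $P_a$ even became active, in which case it never inspects $P_a$ in its waiting room at all. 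Excluding this is precisely the content of the paper's Lemma~\ref{lemma2}, which is proved by induction over the successive color intervals $I_1, I_2, \ldots$ (the induction hypothesis for $I_{i-1}$ pins down where the previous flipper and its partner chose their colors, which in turn pins down the current one); an equivalent minimal-counterexample argument over flips would also do, but some such induction is unavoidable, since ruling out the ``stale color'' scenario amounts to invoking the invariant for earlier flips. The same timing issue infects your conflicting case: ``$P_c$ is blocked at line 22'' presumes $P_c$'s waiting-room check on $P_a$ occurs after $P_a$ wrote its token color; without the interval induction that is unjustified (the conclusion in that case can be salvaged because OPPOSITECOLOR inspects \emph{all} active processes regardless of conflict, but not by the priority rule you invoke).

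Three smaller problems. First, your mutual exclusion argument asserts a black $P_i$ ``could only have passed line 22 with \emph{GlobalColor} white,'' but $P_i$ can also pass that wait because $P_j$'s token still showed session $0$ or color $\perp$ at the moment of the check; the paper closes this hole by proving Lemma~\ref{lemma4} first and then arguing that two violators must be doorway-concurrent, hence each waits at line 17 until the other's color is actually written before comparing priorities. Second, your RMR analysis of the spin on \emph{GlobalColor} rests on a false premise: in the CC model, cache invalidations are caused by \emph{writes}, not by value changes, and an exiting process of the opposite color can overwrite \emph{GlobalColor} with its current value. So ``at most one invalidation'' is wrong; the paper instead amortizes these spurious invalidations --- at most one per other process during the entire waiting room, using Lemma~\ref{lemma3} --- to get $O(N)$ overall. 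Third, your proposed per-instant token bound (``numbers of the active same-color processes are bounded by the number of such processes'') is false: in the paper's own $N{+}1$ scenario, the re-entering process holds number $N{+}1$ while only $N$ same-color processes are active. That is why Theorem~\ref{bwbound} argues globally, showing a chain of $N{+}2$ successive increments would force all its members to read \emph{GlobalColor} in one interval and then exhibiting a flip inside that interval, a contradiction.
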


Even though the token numbers in our algorithm are bounded, the bound is $N+1$
which is not a constant.  We leave the development of
a linear time and linear space GME algorithm satisfying all five properties,
which uses only simple read and
write instructions and whose shared registers are bounded by a constant,
as an open problem. 

\newpage

\section*{Acknowledgments}
\label{ack}

    The authors would like to thank an anonymous referee who found a
subtle error in an earlier version of our Black and White Bakery GME algorithm.

\newpage


\section*{Appendix I: Proof of Correctness of the Generalized Lamport's Bakery Algorithm}

In this appendix, we present a formal proof of Theorem~\ref{GLBcorrect} stated in Section~\ref{simple}.
More specifically, we show that the Generalized Lamport's Bakery Algorithm
algorithm satisfies the properties P1 through P5 and runs in linear time and space.

\begin{thm}
\label{fcfs}
The GLB algorithm has the FCFS property.
\end{thm}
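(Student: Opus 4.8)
The plan is to adapt Lamport's classical FCFS argument, with the only new ingredient being the bookkeeping of session numbers. Fix an invocation $I_i$ of $P_i$ whose doorway precedes an invocation $I_j$ of $P_j$, and let $s_i \neq s_j$ be the (distinct, positive) session numbers requested in these two invocations. By the definition of ``doorway precedes,'' $P_i$ executes line~6 of $I_i$ before $P_j$ executes line~3 of $I_j$; since line~5 lies inside the doorway, this forces the ordering $P_i$'s line~5 write to \emph{Token}$[i]$ $<$ $P_j$'s line~3 $<$ $P_j$'s line~5 read of \emph{Token}$[i]$.

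First I would establish the key ordering fact: at the moment $P_j$ computes its token at line~5, it reads \emph{Token}$[i]$ either as the positive value written in $I_i$ or as $0$. The happens-before ordering above guarantees $P_j$ reads a value no older than $P_i$'s $I_i$ write, and the only way \emph{Token}$[i]$ can subsequently change is the reset at line~11, which sets it to $0$. In the first case, since \emph{Token}$[j]$ is set to $1+$(max of other token numbers) and \emph{Token}$[i]$ is one of those numbers, we get \emph{Token}$[j]\ge 1+\emph{Token}[i]>\emph{Token}[i]$, hence $(\emph{Token}[i],i)<(\emph{Token}[j],j)$, so the disjunct $(\emph{Token}[j],j)<(\emph{Token}[i],i)$ is false. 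In the second case ($P_j$ reads $0$), $P_i$ has already executed line~11 and therefore already completed the critical section in $I_i$, so $P_i$ precedes $P_j$ and FCFS holds trivially.

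Next I would argue by contradiction about the instant $P_j$ enters the critical section in $I_j$. Suppose $P_i$ has not yet entered the CS in $I_i$ at that instant; then throughout $P_i$'s waiting room in $I_i$ the invariant \emph{Token}$[i]>0$ and \emph{Session}$[i]=s_i$ holds, where $s_i\neq 0$ and $s_i\neq s_j$. I would then inspect the iteration of $P_j$'s waiting-room loop that examines $P_i$ (running process $P_j$, examined process $P_i$). To pass line~9 there, one of three disjuncts must hold: the token disjunct $(\emph{Token}[j],j)<(\emph{Token}[i],i)$ is false by the ordering fact; \emph{Token}$[i]=0$ is false since $P_i$ is active; and \emph{Session}$[i]\in\{0,s_j\}$ is false since \emph{Session}$[i]=s_i$ is nonzero and conflicting. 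All three disjuncts fail, so $P_j$ is blocked at line~9 and could not have reached the CS, a contradiction. Hence $P_i$ enters the CS in $I_i$ before $P_j$ enters in $I_j$, which is exactly P5.

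The main obstacle is the careful tracking of \emph{which} value each shared register holds at the instant it is read rather than at an earlier write. The two delicate points are (i) that $P_j$'s read of \emph{Token}$[i]$ at line~5 genuinely reflects $P_i$'s doorway value, which is precisely what ``doorway precedes'' buys us, and (ii) that any reset value observed later (\emph{Token}$[i]=0$ or \emph{Session}$[i]=0$) can only arise after $P_i$ has passed through the CS; the latter I handle using the exit order of line~11 before line~12, so that a zero session implies a zero token and hence a completed invocation. Note that the \emph{Choosing} flag and line~8 play no essential role here, since a doorway-preceding $P_i$ has already set \emph{Choosing}$[i]$ to false; the entire weight of the argument rests on the token ordering at line~9, with the session disjuncts neutralized by the different-sessions hypothesis.
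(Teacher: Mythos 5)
Your proof is correct and takes essentially the same route as the paper's: doorway precedence forces $P_j$ to read $P_i$'s token at line~5 and choose a strictly larger one, so the line-9 wait on the conflicting process $P_i$ blocks $P_j$ until $P_i$ has left the CS. The paper's proof is simply a terser statement of this argument; your added case analysis (\emph{Token}$[i]$ read as $0$, i.e.\ $P_i$ already past its CS, where FCFS holds vacuously) and the explicit contradiction at the instant $P_j$ enters the CS are refinements of details the paper glosses over.
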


\begin{proof}
Assume two conflicting processes $P_{i}$ and $P_{j}$ request the CS.
Also assume that $P_{i}$ finished the doorway before $P_{j}$ starts the doorway.
So, when process $P_{j}$ is computing its token number, it will read the token number of $P_{i}$ and then select a larger token number.
As $P_{j}$ has a larger token number, when $P_{j}$ checks with $P_{i}$ at line 9, it waits for $P_{i}$ to exit the CS.
\qed
\end{proof}

\begin{thm}
\label{me}
The GLB algorithm has the Mutual Exclusion property.
\end{thm}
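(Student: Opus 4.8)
The plan is to argue by contradiction, following the skeleton of Lamport's classical mutual-exclusion proof but adapting it to the extra session-based disjuncts. Suppose two \emph{conflicting} processes $P_i$ and $P_j$ are in the critical section at a common instant $t$; thus their session values $s_i,s_j$ are positive and $s_i\neq s_j$. First I would record two stability facts read off directly from the code: once a process writes its session (line~4) and its token (line~5) in its current invocation, these values persist unchanged until it executes the exit section, and between invocations both are $0$. In particular, at $t$ we have $\emph{Session}[i]=s_i$, $\emph{Session}[j]=s_j$ (both nonzero and distinct) and $\emph{Token}[i],\emph{Token}[j]$ positive and fixed. Since $<$ on pairs is the lexicographic total order with the process id as tie-breaker, it then suffices to derive the two inequalities $(\emph{Token}[i],i)<(\emph{Token}[j],j)$ and $(\emph{Token}[j],j)<(\emph{Token}[i],i)$, which cannot both hold when $i\neq j$.

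The core is a one-sided lemma: if $P_i$ is in the critical section and $P_j$ is a conflicting process also in the critical section at $t$, then $(\emph{Token}[i],i)<(\emph{Token}[j],j)$; the symmetric instance then produces the contradictory pair. To prove the lemma I would pivot on the instant $A_i\le t$ at which $P_i$ clears the wait at line~8 in its iteration for $j$. The role of the \emph{Choosing} flag is to pin down $A_i$: throughout $P_j$'s current doorway $\emph{Choosing}[j]$ is true, and from $P_j$'s line~4 onward $\emph{Session}[j]=s_j\notin\{0,s_i\}$, so \emph{neither} line-8 disjunct can hold between $P_j$'s session write (line~4) and its reset of \emph{Choosing} (line~6). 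Hence either (late) $A_i$ occurs after $P_j$ has finished its doorway, or (early) $A_i$ occurs before $P_j$ writes its current session. In the late case, from $A_i$ up to $P_j$'s exit --- an interval containing $P_i$'s subsequent read at line~9 --- we have $\emph{Session}[j]=s_j$ and $\emph{Token}[j]$ at its final positive value, so the only disjunct that can release $P_i$ at line~9 is $(\emph{Token}[i],i)<(\emph{Token}[j],j)$. In the early case, $P_j$'s token computation at line~5 happens after $A_i$, whereas $P_i$ fixed its own token before entering the waiting room and hence before $A_i$; therefore $P_j$'s line-5 scan reads $P_i$'s current (still live) token and sets $\emph{Token}[j]$ strictly larger, again yielding $(\emph{Token}[i],i)<(\emph{Token}[j],j)$.

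The main obstacle, and the genuine departure from the classical argument, is precisely these session disjuncts: they let a process march past a conflicting peer at lines~8 and~9 without ever comparing tokens, so one cannot simply quote Lamport. The step I expect to demand the most care is justifying the dichotomy on $A_i$ and, in the early case, verifying that $P_j$'s read of $\emph{Token}[i]$ genuinely falls after $P_i$'s token write and while that value is still live --- both of which rest on the doorway write order (\emph{Choosing}, then \emph{Session}, then \emph{Token}) and on line~8 being evaluated strictly before line~9. Once the one-sided lemma holds in both cases, applying it with $i$ and $j$ interchanged and invoking the antisymmetry of the lexicographic order closes the contradiction, establishing P1.
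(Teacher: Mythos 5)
Your proof is correct, and its core timing analysis is the same one the paper's proof rests on: either $P_i$ clears the line-8 wait for $j$ only after $P_j$'s doorway is complete, in which case the line-9 test can only be passed through the token comparison (since \emph{Token}$[j]$ is then positive and \emph{Session}$[j]$ conflicting until $P_j$ exits), or $P_i$ clears line 8 before $P_j$ writes its session, in which case $P_j$'s line-5 scan reads $P_i$'s still-live token and selects a strictly larger one. Where you genuinely differ is in the packaging. The paper first invokes its FCFS theorem to reduce to the doorway-concurrent case, then assumes WLOG that $P_i$ enters the CS first, and derives the contradiction by showing that $P_j$ must still be blocked at line 9; you use neither FCFS nor the WLOG, instead proving a symmetric one-sided lemma --- $(\emph{Token}[i],i)<(\emph{Token}[j],j)$ whenever two conflicting processes are simultaneously in the CS --- and applying it with the roles swapped, closing via antisymmetry of the lexicographic order. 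Your route buys self-containedness (P1 no longer depends on the P5 proof) and a uniform treatment of all executions, including non-doorway-concurrent ones. It also quietly repairs a small slip in the paper's literal case split: the paper's two cases are ``$P_j$ has written \emph{Session}$[j]$ but not finished the whole doorway'' and ``$P_j$ has not written \emph{Session}$[j]$,'' which omits the possibility that $P_j$ had already completed its entire doorway when $P_i$ checks line 8; your ``late'' branch covers that case explicitly. What the paper's route buys in exchange is brevity, since the FCFS theorem was proved immediately beforehand and absorbs the work your lemma does when one process doorway-precedes the other. As you anticipated, the only delicate point in your argument is the dichotomy on $A_i$ under non-atomic evaluation of the line-8 disjunction; defining $A_i$ as the final successful read (of \emph{Choosing}$[j]$ or of \emph{Session}$[j]$) makes it airtight, exactly because neither disjunct can be observed to hold between $P_j$'s line 4 and line 6.
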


\begin{proof}
Suppose two conflicting processes $P_{i}$ and $P_{j}$ are in the CS at the same time with different sessions.
By Theorem \ref{fcfs}, $P_{i}$ and $P_{j}$ must be doorway-concurrent.
Without loss of generality, we assume process $P_{i}$ enters the CS first.
Since $P_{i}$ and $P_{j}$ are doorway concurrent, $P_{j}$ has at least finished executing line 3 when $P_{i}$ checks on $P_{j}$ at line 8.

If process $P_{j}$ has also finished updating \emph{Session}$[j]$, but not the whole doorway by that time, $P_{i}$ will wait for $P_{j}$ to finish the doorway, as they are requesting conflicting sessions. 
So, when process $P_{i}$ passes line 8, $P_{j}$ will have a valid token number.
Since process $P_{i}$ enters the CS first, it must have a smaller token number than $P_{j}$.
As a consequence, at line 9, process $P_{j}$ will wait for $P_{i}$ because it finds that $P_{i}$ has a smaller token number.
Therefore, $P_{j}$ can not enter the CS until $P_{i}$ finishes the CS and resets its token number, which is contradicting with our assumption.

On the other hand, if process $P_{j}$ has not finished updating \emph{Session}$[j]$ by that time, then process $P_{i}$ will not wait on $P_{j}$ in line 8 as \emph{Session}$[j] = 0$.
However, if this be the case, $P_{i}$ has already selected its token number whereas $P_{j}$ has not yet begun selecting its token number. 
So, process $P_{j}$'s token number is guaranteed to be larger than that of $P_{i}$ and so $P_{j}$ will wait on $P_{i}$ at line 9.
Therefore, process $P_{j}$ can not enter the CS until $P_{i}$ finishes the CS and resets its token number, which is contradicting with our assumption.
\qed
\end{proof}

\begin{thm}
\label{be}
The GLB algorithm has the Bounded Exit property.
\end{thm}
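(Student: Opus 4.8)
The GLB algorithm has the Bounded Exit property.

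Let me look at the exit section of the GLB Algorithm (lines 12-13):
- Line 12: `Token[i] := 0`
- Line 13: `Session[i] := 0`

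So the exit section consists of exactly two write statements.

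The Bounded Exit property (P3) states: After entering the exit section, a process is guaranteed to leave it within a bounded number of its own steps.

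This is essentially trivial. The exit section has exactly two simple write operations with no wait statements, no loops, no conditionals that depend on other processes. So any process entering the exit section executes exactly 2 steps and leaves.

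This should be a one-line proof.The plan is to observe that the exit section of the GLB Algorithm, as shown in Fig.~\ref{GLBAlgo}, consists solely of lines 12 and 13, namely the two assignments $\emph{Token}[i] := 0$ and $\emph{Session}[i] := 0$. Neither statement is a \textbf{wait until} statement, a loop, nor a conditional, and in particular neither depends in any way on the state or progress of any other process.

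First I would note that by the atomicity assumption in our model, each of these two write operations completes in a single step of process $P_i$. Since there are exactly two such statements and the control flow is strictly sequential with no branching or waiting, process $P_i$ executes precisely two of its own steps after entering the exit section and then proceeds to the next iteration of the outer loop (the \textbf{REMAINDER SECTION}). Thus the number of steps taken in the exit section is bounded by the constant $2$, independent of $N$ and of the behavior of any other process.

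There is essentially no obstacle here: the property follows immediately from the syntactic form of the exit section. The only subtlety worth stating explicitly is that the two writes are wait-free by construction---no process can stall $P_i$ inside the exit section---so the bound holds regardless of asynchrony or the actions of conflicting processes. Hence the GLB algorithm satisfies the Bounded Exit property P3. \qed
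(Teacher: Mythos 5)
Your proof is correct and follows exactly the same approach as the paper's: both observe that the exit section consists solely of the two write statements $\emph{Token}[i] := 0$ and $\emph{Session}[i] := 0$, which contain no waiting and thus complete in a bounded (indeed constant) number of the process's own steps. Your version merely spells out the constant bound of $2$ and the wait-freedom explicitly, which the paper leaves as "trivially."
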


\begin{proof}
Since the exit section consists of two simple write instructions, a process that enters the exit section will trivially finish it within a  bounded number of its own steps.
\qed
\end{proof}


We now define another property called {\em Deadlock Freedom}.
Deadlock freedom simply means deadlocks cannot occur in the system.
Informally, {\em Deadlock} occurs in the system when one or more processes are ``trying to enter'' the CS, but no process ever does so.
Lamport (see page 329 in \cite{Lamp86}) has shown that under the assumptions that no process stays in the CS forever (which we are assuming), there are only finitely many processes (which we are assuming) and the algorithm satisfies the ``Bounded Exit property'' (which we have just now shown is the case for our algorithm), the deadlock freedom property can be formally stated as follows.

\begin{description}
\item[P6] {\bf Deadlock Freedom} If one or more processes are forever trying to
enter their critical section with no success for any of them, then there exists
a process that enters the critical section infinitely often.
\end{description}

\begin{thm}
\label{df}
The GLB algorithm has the Deadlock Freedom property.
\end{thm}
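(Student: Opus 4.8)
The plan is to establish P6 by contradiction, exhibiting a single process that is forced into the critical section whenever the system appears to be stuck. Assume the antecedent of P6 holds (one or more processes are forever trying to enter, none of them succeeding), but the consequent fails, so no process enters the CS infinitely often. Then only finitely many CS entries occur in the whole execution, and there is a time $T$ after which no process ever enters the CS. The first step is to argue that after $T$ the contending configuration eventually \emph{settles}. A process leaves the entry section only by entering the CS, so after $T$ no process leaves the entry section; since there are only $N$ processes, only finitely many can still join it, so after some $T^{*} \ge T$ the set $S$ of processes in the entry section is constant. Because the doorway (lines 3--6) contains no wait statements, liveness guarantees that every process in $S$ finishes its doorway; hence after some $T^{**} \ge T^{*}$ each $P_{j} \in S$ has $\emph{Choosing}[j]$ permanently false, a fixed nonzero $\emph{Token}[j]$, and a fixed $\emph{Session}[j]$, while every process outside $S$ rests in the remainder with $\emph{Session}[j] = \emph{Token}[j] = 0$ and $\emph{Choosing}[j]$ false (the values it wrote during its last exit).

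The core step is then a minimality argument in the spirit of the classical Bakery proof. Among the finitely many processes of $S$ let $P_{m}$ be the one minimizing the pair $(\emph{Token}[m], m)$; this minimum is well defined and unique since the identifiers are distinct and the pairs are totally ordered. I claim $P_{m}$ clears both wait statements for every index $j$ and thus enters the CS. At line 8 the predicate $\emph{Choosing}[j]$ is false for every $j$ (whether $P_{j} \in S$ or sits in the remainder), so $P_{m}$ never blocks there. At line 9 I examine each $j \neq m$: if $P_{j}$ is in the remainder then $\emph{Token}[j] = 0$ and the second disjunct holds; if $P_{j} \in S$ is non-conflicting, i.e. $\emph{Session}[j] \in \{0, \emph{mysession}\}$, the third disjunct holds; and if $P_{j} \in S$ is conflicting then $\emph{Token}[j] \neq 0$, so the first disjunct $(\emph{Token}[m], m) < (\emph{Token}[j], j)$ holds by the minimality of $P_{m}$. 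Hence $P_{m}$ passes the entire loop and enters the CS at some time after $T$, contradicting both the choice of $T$ and the assumption that $P_{m}$ never succeeds. This contradiction proves P6.

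The step I expect to be the main obstacle is not the minimality argument, which closely mirrors Lamport's original reasoning, but the rigorous justification of the \emph{settling} claim --- specifically, ruling out that continual churn at the doorway or a stream of newly arriving processes prevents a stable minimum from ever forming. The wait-freedom of lines 3--6 disposes of the first worry, and the token rule of line 5 (one plus the maximum of the other token numbers) disposes of the second: any process that begins its doorway after $P_{m}$ has fixed $\emph{Token}[m]$ reads a value at least $\emph{Token}[m]$ and therefore chooses a strictly larger token, so no latecomer can undercut $P_{m}$. I would record these two observations explicitly before invoking minimality, so that the contradiction above rests on a genuinely stable configuration rather than on an informal appeal to ``eventually nothing changes.''
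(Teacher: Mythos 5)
Your proposal is correct and follows essentially the same route as the paper's proof: assume deadlock for contradiction, note that after the configuration stabilizes no process can block at line 8 (all \emph{Choosing} flags become permanently false), and then show that the process with the minimal pair $(\emph{Token}[m], m)$ among the stuck set passes every disjunct at line 9 and enters the CS. Your treatment is simply a more careful elaboration of the settling step that the paper's proof states informally, so no substantive difference exists between the two arguments.
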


\begin{proof}
Suppose the algorithm does not satisfy the deadlock freedom property.
Then there is an execution of the algorithm in which a nonempty set $S$ of processes enter the entry section but none of them enters the CS, and no process enters the CS infinitely often.
We observe that no process can wait at line 8 forever since every process $P_{j}$ that
requests the CS will finish the doorway eventually and set $\emph{Choosing}[j]$ to false and process $P_{j}$ enters the CS only finitely many times.
Therefore, processes in the set $S$ must wait on line 9 forever.
There exists a process $P_{i}$ in set $S$ that has the smallest token number among all processes in $S$ and that $P_{i}$ will pass all other processes at line 9 and enter the CS, which is a contradiction with our assumption.
\qed
\end{proof}


It is easy to observe that the {\em Starvation Freedom} property immediately implies the {\em Deadlock Freedom} property.
However, the converse is not necessarily true.
Deadlock freedom means that the entire system of processes will always continue to make progress.
However, it does not preclude the possibility of individual processes not making progress (i.e., waiting forever in the entry section).

\begin{thm}
\label{sf}
The GLB algorithm has the Starvation Freedom property.
\end{thm}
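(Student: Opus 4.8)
The plan is to obtain starvation freedom by combining the deadlock freedom of Theorem~\ref{df} with the FCFS guarantee of Theorem~\ref{fcfs}, under the standing assumption that no process remains in the critical section forever. I would argue by contradiction, assuming that some process $P_i$ enters the entry section but never reaches the CS. Let $t$ be the moment $P_i$ completes its doorway (line 6) and let $(T_i, i)$ be the token pair it then holds; because $P_i$ never enters the CS it never resets $\emph{Token}[i]$, so this pair stays fixed for the rest of the run.

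First I would localize where $P_i$ can be stuck. The wait at line 8 cannot block $P_i$ forever: every conflicting process completes its (wait-free) doorway in a bounded number of its own steps and sets $\emph{Choosing}$ to false, while same-session processes already satisfy the $\emph{Session}[j] \in \{0, \emph{mysession}\}$ disjunct. Hence $P_i$ can only be delayed at line 9, and only by a process $P_j$ that is simultaneously conflicting, has not yet exited ($\emph{Token}[j] \neq 0$), and holds a strictly smaller token pair $(\emph{Token}[j], j) < (T_i, i)$. The decisive observation is that any invocation whose token is chosen after $t$ reads $\emph{Token}[i] = T_i$ at line 5 and therefore picks a number at least $T_i + 1$; its token pair exceeds $(T_i, i)$, so it can never block $P_i$. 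Thus the only possible blockers form a finite set $B$ of at most $N - 1$ process-invocations, namely the conflicting processes active or concurrently choosing at time $t$ with token pair below $(T_i, i)$.

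Next I would show that $B$ drains. For this I would reuse the minimal-token reasoning of Theorem~\ref{df}: among all processes permanently stuck in the entry section holding a fixed token, the one with the globally smallest token pair is obstructed at line 9 only by conflicting processes of still-smaller token, none of which can itself be permanently stuck without contradicting minimality; hence that process enters the CS and, by bounded exit (Theorem~\ref{be}) together with the no-process-stays-in-CS-forever assumption, leaves and resets its token to $0$. Applying this repeatedly, each member of $B$ eventually enters the CS and resets $\emph{Token}[j] = 0$; since its subsequent invocations all carry tokens exceeding $(T_i, i)$, it never re-enters $B$, so $B$ shrinks monotonically to $\emptyset$ in finitely many CS admissions.

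Once $B$ is empty, no conflicting process with token pair below $(T_i, i)$ remains, and by the token argument above no later arrival can acquire one --- this is precisely the content that FCFS (Theorem~\ref{fcfs}) encodes at the level of the problem specification. Every disjunct on which $P_i$ waits at line 9 is then satisfied for all $j$, so $P_i$ enters the CS, contradicting the assumption that it is starved. The step I expect to be the main obstacle is making the third paragraph rigorous: confirming that $B$ is genuinely finite and strictly shrinking requires care that every post-$t$ invocation really reads the already-written value $\emph{Token}[i] = T_i$ (so that no blocker is regenerated) and that same-session processes never contribute a wait at either line, so that FCFS --- which constrains only conflicting processes --- nonetheless suffices to prevent $P_i$ from being leapfrogged indefinitely.
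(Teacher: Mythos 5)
Your proof is correct, and its top-level decomposition is the same as the paper's: starvation freedom is obtained from deadlock freedom (Theorem~\ref{df}) plus FCFS (Theorem~\ref{fcfs}), under the standing assumptions of bounded exit and no process staying in the CS forever. The difference is in how that implication is discharged. The paper's proof is two lines: it invokes, as a black box, Lamport's general result (page 330 of \cite{Lamp86}) that deadlock freedom together with FCFS implies starvation freedom, and simply combines the two earlier theorems. You instead unfold that implication and re-prove it inline for this specific algorithm: you fix the starved process's token pair $(T_i,i)$, observe that any invocation whose doorway scan occurs after $P_i$'s doorway completes must read $\emph{Token}[i]=T_i$ and hence acquire a strictly larger pair (this is the algorithm-level mechanism underlying FCFS), conclude that the set $B$ of potential line-9 blockers is finite and cannot be replenished, and drain $B$ by the minimal-token argument borrowed from the deadlock-freedom proof together with bounded exit. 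What the paper's route buys is brevity and modularity --- the hard combinatorial content is delegated to a known theorem whose hypotheses (finitely many processes, bounded exit, no one stuck in the CS) the paper has already verified. What your route buys is self-containedness and insight: it shows concretely \emph{why} FCFS prevents blockers from regenerating (token monotonicity after time $t$), and in effect your paragraphs two through four constitute a proof of Lamport's lemma in the special case of this algorithm. One stylistic remark: your minimal-token draining argument in the third paragraph already implies, on its own, that no process is permanently stuck (apply it to the set of all permanently stuck processes, not just to members of $B$), so the explicit construction of $B$ is somewhat redundant; either keep $B$ and argue locally around $P_i$, or drop $B$ and run the global minimality argument once.
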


\begin{proof}
Lamport proved (see page 330 in \cite{Lamp86}) that if an algorithm satisfies the deadlock freedom property and the FCFS property, then it necessarily satisfies the starvation freedom property.
Hence, combining Theorem~\ref{fcfs} and Theorem~\ref{df}, we conclude that our
algorithm has the starvation freedom property.
\qed
\end{proof}

\begin{thm}
\label{ce}
The GLB algorithm has the Concurrent Entry property.
\end{thm}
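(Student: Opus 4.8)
The plan is to show that, under the hypothesis of P4 — no process conflicting with $P_i$ is ever active while $P_i$ is in its entry section — process $P_i$ never actually blocks at either wait statement, so that the entry section degenerates into a bounded (length $O(N)$) traversal of the \textbf{for} loop of lines 7--10.

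First I would isolate the single invariant that drives everything: whenever $P_i$ evaluates \emph{Session}$[j]$ at line 8 or at line 9, the value it reads lies in $\{0, \emph{mysession}\}$. To justify this, recall that $P_j$ assigns a nonzero value to \emph{Session}$[j]$ only at line 4, and it does so while already inside an invocation (it has left the remainder section), hence while $P_j$ is active; moreover $P_j$ clears \emph{Session}$[j]$ back to $0$ only at line 13 on its way out. Thus if $P_i$ ever read a value $S \notin \{0, \emph{mysession}\}$ from \emph{Session}$[j]$, then at that instant $P_j$ would be active with session $S \neq \emph{mysession}$, i.e. a process conflicting with $P_i$ — contradicting the hypothesis of P4. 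Initialization and the reset at line 13 cover the inactive case, giving \emph{Session}$[j] = 0$.

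With this invariant in hand, the two wait statements are discharged immediately for every $j$. At line 8 the disjunct $(\emph{Session}[j] \in \{0, \emph{mysession}\})$ is true, so $P_i$ passes without waiting; at line 9 the same disjunct appears, so again $P_i$ passes without waiting. Consequently $P_i$ performs a constant number of reads per iteration and completes all $N$ iterations, reaching the critical section, within $O(N)$ of its own steps. In particular the wait is bounded, which is exactly the statement of P4.

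The step I expect to require the most care is the invariant of the second paragraph, since it is the only place where atomicity and the precise timing of reads and writes enter. I would phrase it so that it quantifies over the actual instants at which $P_i$ reads \emph{Session}$[j]$, and lean on the fact that line 4 is a single atomic write executed by an already-active process: there is no intermediate instant at which \emph{Session}$[j]$ holds a conflicting value without $P_j$ being simultaneously active and conflicting. Everything else — the per-iteration step bound and the absence of a busy-wait at line 8 — is routine once this invariant is in place.
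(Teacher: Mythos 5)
Your proposal is correct and follows essentially the same route as the paper's proof: both arguments reduce Concurrent Entry to the observation that, in the absence of active conflicting processes, every read of \emph{Session}$[j]$ returns a value in $\{0, \emph{mysession}\}$, so the corresponding disjunct in the wait conditions of lines 8 and 9 is always true and $P_i$ traverses the loop in $O(N)$ of its own steps. The only difference is that you explicitly justify this invariant from the write points of \emph{Session}$[j]$ (lines 4 and 13), a step the paper's proof asserts without elaboration.
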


\begin{proof}
Assume that process $P_{i}$ requests a session, and no other process requests a different session.
This means, for every other process $P_{j}$, it holds that \emph{Session}$[j] \in \{0, $\emph{mysession}$\}$ from the view of $P_{i}$.
Since $P_{i}$ always checks whether \emph{Session}$[j] \in \{0, $\emph{mysession}$\}$ in both the busy-wait loops in line 8 and line 9, it cannot wait on either lines.
So, it enters the CS within a bounded number of its own steps, thus satisfying the concurrent entry property.
\qed
\end{proof}

\begin{thm}
\label{rmrc}
The GLB algorithm has $O(N)$ Remote Memory Reference Complexity. 
\end{thm}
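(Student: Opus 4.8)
The plan is to bound the number of RMRs that an arbitrary process $P_i$ incurs during a single invocation (one passage through the entry, critical, and exit sections) and show it is $O(N)$. I would use the standard CC-model accounting: a write to a shared variable always costs one RMR, while a read costs an RMR only when the locally cached copy was invalidated by a remote write since $P_i$ last read it. In particular, since $\emph{Token}[i]$, $\emph{Session}[i]$ and $\emph{Choosing}[i]$ are written only by $P_i$, every access $P_i$ makes to its own entries is free after the first. I would split the count into the doorway (lines 3--6), the waiting room (lines 7--10), and the exit section (lines 12--13).

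First I would dispatch the easy parts. The doorway performs a constant number of writes (lines 3, 4, 6 and the write of $\emph{Token}[i]$ in line 5) together with the computation in line 5, which reads the $N-1$ other token values once each; hence the doorway costs $O(N)$ RMRs. The exit section is two writes, so $O(1)$. It then remains to show that the single pass of the \textbf{for} loop in lines 7--10 costs $O(N)$; since the loop index runs once from $1$ to $N$ and is never revisited, it suffices to prove that for each fixed $j$ process $P_i$ incurs only $O(1)$ RMRs at lines 8 and 9 combined.

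The heart of the argument, which I would isolate as a lemma, is that while $P_i$ spins on a fixed conflicting $P_j$ in lines 8--9, process $P_j$ writes the variables $P_i$ reads ($\emph{Choosing}[j]$, $\emph{Session}[j]$, $\emph{Token}[j]$) only $O(1)$ times, so $P_i$'s cached copies are invalidated only $O(1)$ times and the local spin generates only $O(1)$ RMRs. The key observation is token monotonicity: once $P_i$ has completed its doorway it holds a fixed token value $\emph{Token}[i]$, and any new conflicting invocation $P_j$ begins afterwards must, at line 5, read $\emph{Token}[i]$ and choose a value $\emph{Token}[j] > \emph{Token}[i]$. Such a $P_j$ is therefore ``behind'' $P_i$ and will itself be forced to wait for $P_i$ at its own line 9 (its clauses $((\emph{Token}[j],j)<(\emph{Token}[i],i))$, $\emph{Token}[i]=0$, and $\emph{Session}[i]\in\{0,\emph{session}_j\}$ all fail while $P_i$ is active and conflicting); it sets $\emph{Choosing}[j]$ to false once, then stalls, contributing only $O(1)$ invalidations. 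The only $P_j$ that can pass $P_i$ is one that already doorway-preceded it (so by Theorem~\ref{fcfs} it enters first); such a $P_j$ completes at most one invocation's worth of writes --- it enters the CS and, on exit, sets $\emph{Token}[j]=0$ and $\emph{Session}[j]=0$ --- after which $P_i$'s line-9 condition (either $\emph{Token}[j]=0$, or $\emph{Token}[j]>\emph{Token}[i]$ once $P_j$ re-chooses) becomes permanently true. Summing the $O(1)$ bound over the $N$ iterations, and adding the $O(N)$ doorway cost, yields $O(N)$.

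The step I expect to be the main obstacle is exactly this per-$j$ bound: the naive spinning analysis for the classical Bakery algorithm breaks down because a fast neighbor can in principle toggle $\emph{Choosing}[j]$ unboundedly often. Ruling this out hinges on the mutual-blocking argument above --- that a conflicting process which is behind $P_i$ in token order is itself pinned behind $P_i$ and so cannot cycle back to generate repeated invalidations --- together with the case check that a process requesting the \emph{same} session as $P_i$ immediately satisfies $P_i$'s $\emph{Session}[j]\in\{0,\emph{mysession}\}$ clause and hence never causes $P_i$ to wait. Making these cases exhaustive, and cleanly separating the at-most-one doorway-preceding invocation of $P_j$ from its later blocked attempts, is where the argument needs the most care; once that is in place the $O(N)$ RMR bound follows.
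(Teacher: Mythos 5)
Your proposal is correct and follows essentially the same route as the paper's proof: the same doorway/waiting-room/exit decomposition with the $O(N)$ cost concentrated in line 5, and the same key argument that any conflicting invocation of $P_j$ begun after $P_i$'s doorway acquires a larger token and is pinned behind $P_i$ by FCFS, so each index $j$ contributes only $O(1)$ invalidation-induced RMRs at lines 8--9. The paper merely states this with explicit constants (at most five RMRs per spin line) rather than isolating it as a lemma, so no further comparison is needed.
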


\begin{proof}
We analyze the RMR complexity of the algorithm.
Recall that in the CC model, all shared variables are stored in a global memory module
and processes migrate them to their local cache to access them.
In the waiting room, there are only two loops viz., the busy-wait loops in line 8 and line 9.
In line 8, when process $P_{i}$ is busy waiting for a process $P_{j}$, if \emph{Choosing}$[j]$ changes to false, then $P_{i}$ will immediately terminate the wait.
It is possible that (before $P_{i}$ observes the changed value of \emph{Choosing}$[j]$)
process $P_{j}$ sets the \emph{Choosing}$[j]$ to true again and requests another conflicting
session. In that case, since $P_{i}$ doorway proceeds $P_{j}$, process $P_{j}$ will get a  larger token number than $P_{i}$.
So, $P_{j}$ can not enter the CS to finish that invocation and therefore can not change the \emph{Session} again until $P_{i}$ finishes the CS.
Thus, line 8 can only involve a maximum of five RMR (three for \emph{Choosing}$[j]$ and two for \emph{Session}$[j]$).
Similarly in line 9, when process $P_{i}$ is busy-waiting on \emph{Token}$[j]$, if \emph{Token}$[j]$
changes, the new value of it will be either zero or a larger token number and
in either case, $P_{i}$ will terminate the wait. Also, any change in
\emph{Session}$[j]$ will also entail a change in \emph{Token}$[j]$ and thus
terminate the wait.
Hence, line 9 involves a maximum of five RMR (one for \emph{Token}$[i]$, two for \emph{Token}$[j]$, two for \emph{Session}$[j]$).
There are only constant number of RMR in line 8 and line 9.
As these two lines are enclosed within a for loop that can run a maximum of $N$ times, it follows that the entire waiting room section is of $O(N)$ RMR complexity.
Note that the doorway made up of line 3 through line 6 involves only constant number of RMR, except for the implicit loop in line 5.
The implicit loop in line 5 has $O(N)$ RMR complexity as it involves inspecting the token numbers of all other processes.
Finally, it is easy to see that the exit section consisting of lines 12-13 involves exactly two RMR.
Hence, the overall RMR complexity of this algorithm in the CC model is $O(N)$.
\end{proof}

  It is trivial to observe that the Generalized Lamport's Algorithm uses only $O(N)$ shared
space. Hence, the correctness of Theorem~\ref{GLBcorrect} is now established.


\section*{Appendix II: Analysis of Burns-Lamport Mutual Exclusion Algorithm}

In this appendix, we show that the ``ME algorithm'' used by Hadzilacos \cite{Hadz01} and Jayanti et al. \cite{JPT03} are of RMR complexity $\Omega(N^{2})$.
This elegant ``ME algorithm'' is independently discovered by Burns \cite{Burns81} and Lamport \cite{Lamp86}.
This algorithm is depicted in Figure~\ref{BLalgo}.
In this algorithm \emph{Competing} is a shared array of size $N$.
Each element of the array is a boolean variable, initialized to false.
The code given in Figure~\ref{BLalgo} is for process $P_{i}$.
The variable $j$ is a private variable.

We now briefly describe the algorithm.
Every process $P_{i}$ owns a single bit \emph{Competing}$[i]$.
Only $P_{i}$ can write into \emph{Competing}$[i]$; Other processes can read it.
Before entering the CS, $P_{i}$ sets its bit to true and checks all lower numbered processes (lines 1-3).
If any of them, say process $P_{j}$, is found to have set its bit to true, then process $P_{i}$ resets its bit to false, allowing the smaller-numbered process to make progress.
It then waits for process $P_{j}$'s bit to become false and then restarts the competition to enter the CS by going to line 1.
Having checked all lower-numbered processes, process $P_{i}$ then checks the higher-numbered ones and waits for each of them to set its bit to false.
Now, however, while $P_{i}$ is waiting it does not set its bit to false.
After that process $P_{i}$ enters the CS.
It turns out that this simple algorithm guarantees mutual exclusion.
We refer the reader to \cite{Lamp86} (see also \cite{Burns81}) for a proof of correctness.

\begin{algorithm}[t]
\caption{Burns-Lamport ME algorithm}
\label{BLalgo}
\begin{algorithmic}[1]
\STATE \textbf{L}: \emph{Competing}$[i] := $ \TRUE
\FOR{$j := 1$ \TO $i-1$}
	\IF{\emph{Competing}$[j]$}
	\STATE \emph{Competing}$[i] := $ \FALSE
	\STATE \textbf{wait until} \textbf{not} \emph{Competing}$[j]$
\STATE goto \textbf{L}
\ENDIF
\ENDFOR
	\FOR{$j := i+1$ \TO $N$}
	\STATE \textbf{wait until} \textbf{not} \emph{Competing}$[j]$
\ENDFOR
\end{algorithmic}
\end{algorithm}

To analyze the RMR complexity under the CC Model, consider the following sequence of events

\begin{enumerate}
	\item Process $P_{N}$ sets its bit to true.

	\item Process $P_{(N-1)}$ sets its bit to true.

	\item Process $P_{N}$ checks all lower-numbered processes and finds that $P_{(N-1)}$'s bit is set.
	So, $P_{N}$ sets its bit to false and waits for \emph{Competing}$[N-1]$ to become false.

	\item Process $P_{(N-2)}$ sets its bit to true.

	\item Process $P_{(N-1)}$ checks all lower-numbered processes and finds that process $P_{(N-2)}$'s bit is set.
	So, process $P_{(N-1)}$ sets its bit to false and waits for \emph{Competing}$[N-2]$ to become false.

	\item Process $P_{N}$ now finds \emph{Competing}$[N-1]$ to be false and so restarts the competition by setting its bit to true.

	\item Process $P_{N}$ checks all lower-numbered processes and finds that process $P_{(N-2)}$'s bit is set.
	So, $P_{N}$ sets its bit to false and waits for \emph{Competing}$[N-2]$ to become false.

	\item Process $P_{(N-3)}$ sets its bit to true.

	\item Process $P_{(N-2)}$ checks all lower-numbered processes and finds that process $P_{(N-3)}$'s bit is set.
	So, process $P_{(N-2)}$ sets its bit to false and waits for \emph{Competing}$[N-3]$ to become false.

	\item Process $P_{N}$ now finds \emph{Competing}$[N-2]$ to be false and so restarts the competition by setting its bit to true.

	\item Process $P_{N}$ checks all lower-numbered processes and finds that process $P_{(N-3)}$'s bit is set.
	So, $P_{N}$ sets its bit to false and waits for \emph{Competing}$[N-3]$ to become false.

	\item .

	\item .

	\item .

	\item Process $P_{N}$ checks all lower-numbered processes and finds that process $P_{1}$'s bit is set.
	So, $P_{N}$ sets its bit to false and waits for \emph{Competing}$[1]$ to become false.

	\item Process $P_{1}$ checks all higher-numbered processes and finds that all the bits are false and enters CS.

	\item Process $P_{1}$ exits the CS and sets its bit to false.
\end{enumerate}

Note that during the above sequence of events, process $P_{N}$ got blocked by each one of the lower numbered processes once.
At the end of the above sequence, process $P_{1}$'s request is satisfied.
Also, at the end of the above sequence, the \emph{Competing} bit of $P_{2}$ through $P_{N}$ are all false.
Now, we can create similar sequence of events, but this time with only processes $P_{2}$ through $P_{N}$ participating.
We can recursively create similar sequence of events again and again until finally we have only process $P_{N}$ participating.

The net effect is that in this worst case scenario, process $P_{N}$ got blocked by process $P_{(N-1)}$ a total of $(N-1)$ times, by process $P_{(N-2)}$ a total of $(N-2)$ times and so on.
So, the total number of times, process $P_{N}$ gets blocked by some other processes is

\[\sum_{k=1}^{N-1} k = N(N-1)/2 = \Omega(N^2) \]

In the CC model, at least one remote memory reference is involved each time a process gets blocked and hence the worst case RMR complexity of the algorithm in Figure~\ref{BLalgo} is $\Omega(N^2)$ in the CC Model.
It was erroneously claimed in \cite{Hadz01} and tacitly inherited in \cite{JPT03} that this algorithm is of $O(N)$ RMR complexity.
That claim is clearly wrong as illustrated above.


\section *{Appendix III: Proof of Correctness of the Black and White Bakery GME Algorithm}

In this appendix, we present a complete proof of the correctness and complexity analysis of Black and White Bakery GME Algorithm.
In particular, we show that our algorithm satisfies the properties P1 through P5, uses only bounded registers and has linear time and space complexity.

We use the notation $P_{i}{@}{x}$ to denote that process $P_{i}$ is executing line $x$ of the algorithm.
The notation $P_{i}{@}x \rightarrow P_{j}{@}y$ means process $P_{i}$ executes line $x$ before process $P_{j}$ executes line $y$.
We use $\overline{color}$ to denote the \emph{opposite} color (i.e, $\overline{\emph{color}} = $ black if $\emph{color} =$ white, and vice versa).
$\overline{color}$ is not defined if $\emph{color} = \perp$.
It is easy to see that \emph{GlobalColor} always has a value of either black or white.
Consider the execution of the algorithm on the global time line.
We use $I_{i}$ to denote the time interval that the \emph{GlobalColor} remains as some color after it has been flipped $(i-1)$ times from the beginning.
Once the \emph{GlobalColor} is changed to the opposite color at the end of $I_{i}$, the \emph{GlobalColor} remains as that color during the interval $I_{i+1}$.
We use $t_{i}$ to represent the time point at which the \emph{GlobalColor} is flipped to the opposite color at the end of $I_{i}$.
Note that, $t_{i}$ will be the starting point of the time interval $I_{i+1}$.
Process $P_{t_{i}}$ denotes the process that flips the \emph{GlobalColor} at $t_{i}$.

Here is an example to show a possible execution of the algorithm.
\begin{center}
\begin{tikzpicture}[line width=1pt][every node/.style={midway}]
\draw[|-|] (0,0) --  node[anchor=south] {$I_{1}$} node[anchor=north] {\tiny{\emph{GC} $=$ B}} (3,0)  node[anchor=north,outer sep=3pt] {$t_{1}$} ;
\draw[-|] (3,0) --  node[anchor=south] {$I_{2}$} node[anchor=north] {\tiny{\emph{GC} $=$ W}} (6,0) node[anchor=north,outer sep=3pt] {$t_{2}$};
\draw[->] (6,0) --  node[anchor=south] {$I_{3}$} node[anchor=north] {\tiny{\emph{GC} $=$ B}} (8,0);
\end{tikzpicture}
\end{center}
At the beginning, the \emph{GlobalColor} is initialized to black, and it remains black in the time interval $I_{1}$.
At the time point $t_{1}$, the \emph{GlobalColor} is set to white by a process $P_{t_{1}}$.
Then, the \emph{GlobalColor} remains as white in the time interval $I_{2}$.
Note that we are not claiming that the \emph{GlobalColor} will not be updated during $I_{2}$.
However, the \emph{GlobalColor} will not be updated to black during $I_{2}$ by the very definition of the interval $I_{2}$.
The \emph{GlobalColor} remains as white in the time interval $I_{2}$ until a process $P_{t_{2}}$ sets it to black at time point $t_{2}$.

\begin{lem}
\label{lemma1}
If a process $P_{i}$ gets a Token.number of 2 or more, then when $P_{i}$ is calculating its Token.number in the doorway (lines 7-13), there must exist a conflicting process $P_{j}$ with the same Token.color as $P_{i}$ and a smaller Token.number.
Moreover, $P_{i}$ can not enter the CS before $P_{j}$ finished the CS.
\end{lem}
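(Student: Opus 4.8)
The plan is to dispatch the two assertions separately, deriving the existence of $P_j$ directly from the selection code and reserving the real work for the ordering claim. For the existence part, I would note that every colored token written at line 14 carries a number of at least $1$, because the quantity incremented at line 13 is a maximum of nonnegative values. Hence $P_i$'s final \emph{mynumber} is one more than the largest number among the conflicting same-color tokens it scanned in lines 7--12 (and is $1$ if it scanned none), so $\emph{mynumber}\ge 2$ exactly when $P_i$ read, at some execution of line 8, a token $\emph{Token}[j]$ whose color is \emph{mycolor}, whose session lies outside $\{0,\emph{mysession}\}$, and whose number $n$ satisfies $1\le n<\emph{mynumber}$. Taking $P_j$ to be a process attaining this maximum gives the required conflicting, same-color process with a strictly smaller number.

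For the ordering part I would fix the time $T_0$ at which $P_i$ reads this token and first pin down $P_j$'s phase. A token with a genuine (non-$\perp$) color and a positive number is written only at line 14, since line 3 and the exit reset $\emph{Token}[i]:=(0,\perp,0)$ both write the color $\perp$. Thus $T_0$ lies between $P_j$'s line 14 and its exit reset within a single invocation, which I will call invocation $k$; in particular $P_j$ is already past its doorway at $T_0$. The structural fact I would isolate is that $\emph{Token}[j]$ is written only at line 3, line 14, and the exit reset, and that within invocation $k$ the first write following line 14 is the exit reset. Since the exit reset lies strictly after the critical section (line 24), the value read at $T_0$ therefore persists, unchanged, until after $P_j$ has completed the CS of invocation $k$.

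With this in hand I would split on $P_i$'s position when it reaches iteration $j$ of the waiting-room loop. If $P_j$ has already executed its invocation-$k$ exit reset by then, $P_j$ has already left that CS and there is nothing to prove. Otherwise $\emph{Token}[j]$ still holds its $T_0$ value, so at line 18 process $P_i$ reads color $=\emph{mycolor}$ and proceeds to line 19, where all three disjuncts are false: $(\emph{mynumber},i)<(n,j)$ fails because $n<\emph{mynumber}$, the color is still \emph{mycolor}, and the session is still conflicting. Hence $P_i$ blocks. Since every disjunct at line 19 depends only on $\emph{Token}[j]$ and on quantities fixed for $P_i$, the wait can end only when $\emph{Token}[j]$ changes, and by the previous paragraph the only such change is $P_j$'s invocation-$k$ exit reset, which occurs after $P_j$ finishes the CS. In either case $P_i$ cannot complete iteration $j$, hence cannot enter the CS, before $P_j$ finishes the CS of invocation $k$.

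I expect the crux to be the persistence argument above: the claim that a competing token value cannot escape the line-19 blocking condition without $P_j$ first passing through its critical section. This rests on the exhaustive but routine verification that line 3, line 14, and the exit reset are the only writers of $\emph{Token}[j]$, together with the program order that forces a CS completion between the line-14 write and the next write. I would also take care to allow $P_j$ to cycle through several later invocations between $T_0$ and $P_i$'s waiting-room check; this does not affect the conclusion, since any later invocation of $P_j$ can begin only after invocation $k$'s CS has already completed.
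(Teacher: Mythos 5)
Your proposal is correct and takes essentially the same approach as the paper: the existence of $P_j$ comes from the maximum computed in lines 7--13 (which would yield a token number of $1$ otherwise), and the ordering claim comes from $P_i$ blocking at line 19 on $P_j$'s token until $P_j$'s exit-section reset, which occurs only after its CS. The paper states this in two sentences and leaves implicit both the persistence of $\emph{Token}[j]$ between line 14 and the exit reset and the case where $P_j$ exits before $P_i$ reaches it; your write-up supplies exactly those details (the one small refinement is that the case split is cleanest at the moment $P_i$ performs the read in line 18, since $P_j$ may reset while $P_i$ waits at line 17 --- a sub-case your ``in either case'' conclusion already absorbs).
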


\begin{proof}
According to line 9, when a process is calculating its \emph{Token.number}, it will ignore any process with the same session or different \emph{Token.color}.
Assume there doesn't exist such a conflicting process $P_{j}$ with the same \emph{Token.color} as $P_{i}$ and smaller \emph{Token.number}.
It is easy to see that $P_{i}$ will get a \emph{Token.number} of $1$, which is a contradiction with $P_{i}$ getting a \emph{Token.number} of $2$ or more.
Therefore, such $P_{j}$ must exist when $P_{i}$ is calculating its \emph{Token.number}.

Moreover, as $P_{j}$ has the same \emph{Token.color} as $P_{i}$ and a smaller \emph{Token.number}, $P_{i}$ will wait for $P_{j}$ at line 19 until $P_{j}$ exited the CS and reset its \emph{Token}. Thus, our claim is true.
\qed
\end{proof}


\begin{lem}
\label{lemma2}
If a process $P_{t_{i}}$ flips the GlobalColor at the time point $t_{i}$, then there must exist a conflicting process $P_{\widetilde{t_{i}}}$ with the same Token.color as $P_{t_{i}}$ and a smaller Token.number when $P_{t_{i}}$ is calculating its Token$[t_{i}]$.number (lines 7-13).
Moreover, both $P_{t_{i}}$ and $P_{\widetilde{t_{i}}}$ execute line 5 in the time interval $I_{i}$.
\end{lem}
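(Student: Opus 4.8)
The plan is to treat the two assertions separately. The first one follows immediately from Lemma~\ref{lemma1}: to flip the \emph{GlobalColor} at $t_{i}$ the process $P_{t_{i}}$ must execute one of the assignments at lines 28/30, which it can reach only after passing the test $\emph{mynumber}\neq 1$ at line 25; hence its \emph{Token.number} is at least $2$, and Lemma~\ref{lemma1} supplies a conflicting $P_{\widetilde{t_{i}}}$ with the same \emph{Token.color} and a smaller \emph{Token.number} at the moment $P_{t_{i}}$ executes lines 7--13, and moreover tells us that $P_{t_{i}}$ cannot enter the CS until $P_{\widetilde{t_{i}}}$ has left it. Write $c$ for the color of $I_{i}$; since $P_{t_{i}}$ changes \emph{GlobalColor} to $\overline{c}$, its \emph{mycolor} equals $c$, and so does that of $P_{\widetilde{t_{i}}}$.

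For the second assertion I would induct on $i$. The cases $i=1,2$ are immediate: a line-5 read returning $c$ that happens before $t_{i}$ cannot fall in any earlier interval, because before $I_{1}$ there is no interval at all and $I_{1}$ (the only interval preceding $I_{2}$) carries the color $\overline{c}$; and both reads do precede $t_{i}$, the read of $P_{t_{i}}$ preceding its own flip and the read of $P_{\widetilde{t_{i}}}$ preceding its CS and hence, by Lemma~\ref{lemma1}, preceding $t_{i}$. For $i\ge 3$, suppose toward a contradiction that some $P_{h}\in\{P_{t_{i}},P_{\widetilde{t_{i}}}\}$ executes line 5 before $I_{i}$ begins; as $I_{i-1}$ has color $\overline{c}$, this read must actually occur before $t_{i-2}$. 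Now focus on the flip opening $I_{i}$, performed by $P_{t_{i-1}}$ at $t_{i-1}$: this process has \emph{mycolor} $\overline{c}$ and \emph{Token.number} $\ge 2$, so Lemma~\ref{lemma1} hands it a conflicting helper $P_{\widetilde{t_{i-1}}}$ of color $\overline{c}$ whose CS precedes $P_{t_{i-1}}$'s CS. Because $P_{t_{i-1}}$ and $P_{\widetilde{t_{i-1}}}$ are mutually conflicting, at least one of them, say $P_{w}$, requests a session different from that of $P_{h}$ and enters its own CS before $t_{i-1}$.

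The crux is to show that $P_{h}$ has already written its colored token (line 14) before $t_{i-1}$: then the \emph{OPPOSITECOLOR}$(\overline{c})$ scan that $P_{t_{i-1}}$ runs just before flipping would detect an active process of color $c$ and abort the flip at line 26, contradicting that $t_{i-1}$ is a flip point. To force this, note that $P_{h}$ set \emph{Choosing} true at line 4 before its line-5 read, hence before $t_{i-2}$, and keeps its own nonzero session for the whole invocation that straddles $t_{i-1}$ (immediate for $P_{t_{i}}$, which flips only at $t_{i}$; for $P_{\widetilde{t_{i}}}$ this straddling follows once $P_{t_{i}}$ is known to read in $I_{i}$, since then $P_{t_{i}}$ reads $P_{\widetilde{t_{i}}}$'s token of color $c$ during lines 7--13 inside $I_{i}$, witnessing $P_{\widetilde{t_{i}}}$ past line 14 and before its reset at a time exceeding $t_{i-1}$). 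By the induction hypothesis for $i-1$, $P_{w}$ reads its color inside $I_{i-1}$, so it reaches line 17 only after $t_{i-2}$, i.e.\ after $P_{h}$ became \emph{Choosing}; since $P_{w}$ conflicts with $P_{h}$, it cannot pass line 17 for $P_{h}$ until \emph{Choosing}$[h]$ turns false, which occurs only after $P_{h}$ is past line 14. Hence $P_{h}$ writes its token, with color $c$ and nonzero session, before $P_{w}$ — and a fortiori $P_{t_{i-1}}$ — enters the CS, thus before $t_{i-1}$, while it has not yet reset that token by $t_{i-1}$; this is the contradiction that forces the line-5 read into $I_{i}$.

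I expect the main obstacle to be exactly this timing bookkeeping: extracting from the induction hypothesis that $P_{w}$ enters the waiting room only after $P_{h}$ has announced itself as \emph{Choosing}, and that $P_{t_{i-1}}$'s \emph{OPPOSITECOLOR} scan lands after $P_{h}$'s token write yet before its reset. The induction on $i$ is precisely what lets these orderings be pinned down without invoking the global invariant (which is only proved later, in Lemma~\ref{lemma3}), so the argument stays non-circular; once the orderings are fixed, the contradiction with line 26 is automatic and every remaining step is a direct appeal to Lemma~\ref{lemma1} and to the definitions of the intervals $I_{j}$ and the flip points $t_{j}$.
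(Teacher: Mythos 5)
Your proof is correct and follows essentially the same route as the paper's: existence of $P_{\widetilde{t_{i}}}$ from the line-25 test plus Lemma~\ref{lemma1}, then induction on the flip index with the same base cases, and an inductive step that derives a contradiction by showing the allegedly early reader $P_{h}$ must have completed line 14 before $t_{i-1}$ (via the line-17 wait of whichever of $P_{t_{i-1}}, P_{\widetilde{t_{i-1}}}$ conflicts with $P_{h}$, combined with Lemma~\ref{lemma1}'s CS ordering), so that $P_{t_{i-1}}$'s OPPOSITECOLOR scan at line 26 would have detected an opposite-colored token and blocked the flip. If anything, your write-up is more explicit than the paper's on the timing bookkeeping --- that the induction hypothesis forces $P_{w}$ to reach line 17 only after $P_{h}$ has set \emph{Choosing}, and that $P_{h}$'s token survives past $t_{i-1}$ so the scan actually sees it --- points the paper leaves largely implicit.
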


\begin{proof}
Since process $P_{t_{i}}$ flips the \emph{GlobalColor} at time point $t_{i}$, according to line 25, $P_{t_{i}}$ must have the \emph{Token}$[t_{i}]$.\emph{number} of 2 or more.
By Lemma \ref{lemma1}, there must exist a conflicting process $P_{\widetilde{t_{i}}}$ with the same \emph{Token.color} and a smaller \emph{Token}.\emph{number} when $P_{t_{i}}$ is calculating its \emph{Token}$[t_{i}]$.\emph{number} in the doorway.
Next, we show that both $P_{t_{i}}$ and $P_{\widetilde{t_{i}}}$ execute line 5 in time interval $I_{i}$.

We prove our claim using mathematical induction. 
Without loss of generality, assume the \emph{GlobalColor} is initialized to $c$.

\begin{center}
\begin{tikzpicture}[line width=1pt][every node/.style={midway}]
\draw[|-|] (0,0) --  node[anchor=south] {$I_{1}$} node[anchor=north] {\small{\emph{GC} $= c$}} (3,0)  node[anchor=north,outer sep=3pt] {$t_{1}$} ;
\draw[-|] (3,0) --  node[anchor=south] {$I_{2}$} node[anchor=north] {\small{\emph{GC} $= \overline{c}$}}  (6,0) node[anchor=north,outer sep=3pt] {$t_{2}$};
\draw[-|] (6,0) --  node[anchor=south] {$I_{3}$} node[anchor=north] {\small{\emph{GC} $= c$}} (9,0) node[anchor=north,outer sep=3pt] {$t_{3}$};
\draw[->] (9,0) --  (11,0);
\end{tikzpicture}
\end{center}

\begin{description}
	\item [{$\mathbf{K=1}$}:] Process $P_{t_{1}}$ flips the \emph{GlobalColor} from $c$ to $\overline{c}$ at $t_{1}$.
	Obviously,
	$P_{t_{1}}$ and $P_{\widetilde{t_{1}}}$ must have the \emph{Token.color} of $c$.
	Since $I_{1}$ is the first and the only time interval that the \emph{GlobalColor} is $c$ before time point $t_{1}$ and so, $P_{t_{1}}$ and $P_{\widetilde{t_{1}}}$ must execute line 5 in $I_{1}$.

	\item[{$\mathbf{K=2}$}:]
	Process $P_{t_{2}}$ flips the \emph{GlobalColor} from $\overline{c}$ to $c$ at $t_{2}$.
	$P_{t_{2}}$ and $P_{\widetilde{t_{2}}}$ must have the \emph{Token.color} of $\overline{c}$.
	Since $I_{2}$ is the only time interval that the \emph{GlobalColor} is $\overline{c}$ before $t_{2}$, $P_{t_{2}}$ and $P_{\widetilde{t_{2}}}$ execute line 5 in $I_{2}$.

	\item Assume the claim holds for $K = (i-1)$.

	\item[$\mathbf{K=i}$:] Process $P_{t_{i}}$ may flip the \emph{GlobalColor} either from $c$ to $\bar{c}$ or $\bar{c}$ to $c$ at time point $t_{i}$.
	We consider only the case where \emph{GlobalColor} flips from $c$ to $\bar{c}$ at time point $t_{i}$ as the argument is similar in the other case.

	If $P_{t_{i}}$ flips the \emph{GlobalColor} from $c$ to $\bar{c}$ at $t_{i}$, it is easy to see both $P_{t_{i}}$ and $P_{\widetilde{t_{i}}}$ have the \emph{Token.color} of $c$.

	\begin{center}
	\begin{tikzpicture}[line width=1pt][every node/.style={midway}]
	\draw[-|] (0,0) --  node[anchor=south] {$I_{i-2}$} node[anchor=north] {\small{\emph{GC} $= c$}} (3,0)  node[anchor=north,outer sep=3pt] {$t_{i-2}$} ;
	\draw[-|] (3,0) --  node[anchor=south] {$I_{i-1}$} node[anchor=north] {\small{\emph{GC} $= \overline{c}$}}  (6,0) node[anchor=north,outer sep=3pt] {$t_{i-1}$};
	\draw[-|] (6,0) --  node[anchor=south] {$I_{i}$} node[anchor=north] {\small{\emph{GC} $= c$}} (9,0) node[anchor=north,outer sep=3pt] {$t_{i}$};
	\draw[->] (9,0) -- node[anchor=north] {\small{\emph{GC} $= \overline{c}$}} (11,0);
	\end{tikzpicture}
	\end{center}

	We prove our argument by showing both $P_{t_{i}}$ and $P_{\widetilde{t_{i}}}$ execute line 5 in $I_{i}$

	\begin{enumerate}

		\item  Assume $P_{t_{i}}$ does not execute line 5 in $I_{i}$, then it must execute line 5 before $t_{i-2}$ since the \emph{GlobalColor} is $\bar{c}$ during $I_{i-1}$.
		By induction hypothesis, in time interval $I_{i-1}$, $P_{t_{i-1}}$ and $P_{\widetilde{t_{i-1}}}$ execute line 5 and then write their \emph{Token.color} of $\overline{c}$.
		Clearly, $P_{t_{i}}$ must be conflict with either $P_{t_{i-1}}$ or $P_{\widetilde{t_{i-1}}}$ since $P_{t_{i-1}}$ and $P_{\widetilde{t_{i-1}}}$ have different sessions.
		Therefore, according to line 17, at least one of $P_{t_{i-1}}$ or $P_{\widetilde{t_{i-1}}}$ will wait for $P_{t_{i}}$ to finish the doorway to set \emph{Choosing}$[t_{i}]$ to false.
		By Lemma \ref{lemma1}, $P_{t_{i-1}}$ can not enter the CS before $P_{\widetilde{t_{i-1}}}$ finished the CS.
		So, when $P_{t_{i-1}}$ enters the CS, $P_{t_{i}}$ already wrote down its \emph{Token}$[t_{i}]$.\emph{color} of $c$ and finished the doorway.
		After $P_{t_{i-1}}$ finishes the CS and checks the condition at line 26, it will find out $P_{t_{i}}$ has the opposite \emph{Token.color}, and so, $P_{t_{i-1}}$ will not update the \emph{GlobalColor}, which is contradiction with $P_{t_{i-1}}$ flips the \emph{GlobalColor} at $t_{i-1}$.
		Therefore, process $P_{t_{i}}$ executes line 5 in time interval $I_{i}$.

		\item  Since $P_{t_{i}}$ executes line 5 in $I_{i}$, it must read the \emph{Token}$[\widetilde{t_{i}}]$.\emph{number} of $P_{\widetilde{t_{i}}}$ in $I_{i}$, which implies $P_{\widetilde{t_{i}}}$ does not reset its \emph{Token}$[\widetilde{t_{i}}]$ until $P_{t_{i}}$ reads it in $I_{i}$.
		Assume $P_{\widetilde{t_{i}}}$ does not execute line 5 in $I_{i}$, then it must executes line 5 before $t_{i-2}$.
		Obviously, $P_{\widetilde{t_{i}}}$ is conflict with either $P_{t_{i-1}}$ or $P_{\widetilde{t_{i-1}}}$.
		Therefore, before $P_{t_{i-1}}$ enters the CS, $P_{\widetilde{t_{i}}}$ must already write down its \emph{Token}$[\widetilde{t_{i}}]$.\emph{color} of $c$ and finish its doorway.
		$P_{\widetilde{t_{i}}}$ keeps its \emph{Token}$[\widetilde{t_{i}}]$ at least until $P_{t_{i}}$ reads it in $I_{i}$ and this implies $P_{\widetilde{t_{i}}}$ keeps its \emph{Token}$[\widetilde{t_{i}}]$  when $P_{t_{i-1}}$ is exiting. 
		Hence, in the exit section, $P_{t_{i-1}}$ will find that $P_{\widetilde{t_{i}}}$ has the opposite \emph{Token.color} and so, $P_{t_{i-1}}$ will not flip the \emph{GlobalColor}, which is a contradiction.
		So, process $P_{\widetilde{t_{i}}}$ executes line 5 in $I_{i}$.

	\end{enumerate}

	We showed that if $P_{t_{i}}$ flips the \emph{GlobalColor} from $c$ to $\bar{c}$ at $t_{i}$, then both $P_{t_{i}}$ and $P_{\widetilde{t_{i}}}$ execute line 5 in $I_{i}$.

\end{description}

Hence, we proved our claim for the case $K=i$ and therefore the result follows 
by mathematical induction.
\qed
\end{proof}


\begin{lem}
\label{lemma3}
After a process $P_{i}$ executes line 5, the GlobalColor can not be flipped more than once before $P_{i}$ finished the exit section (line 34).
\end{lem}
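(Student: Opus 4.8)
The plan is to argue by contradiction and lean on Lemma~\ref{lemma2}. Suppose $P_{i}$ executes line 5 during the interval $I_{k}$, so that it reads $\emph{GlobalColor}=c$ and will, at line 14, commit to $\emph{Token}[i].\emph{color}=c$. By the definition of the intervals, the only way the \emph{GlobalColor} can be flipped twice before $P_{i}$ leaves the exit section is for the flip at $t_{k}$ (from $c$ to $\overline{c}$) and the flip at $t_{k+1}$ (from $\overline{c}$ back to $c$) to both occur before $P_{i}$ finishes line 34. I would assume exactly this, so in particular $P_{i}$ has not yet executed line 34 by time $t_{k+1}$, and aim for a contradiction.

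First I would invoke Lemma~\ref{lemma2} on the process $P_{t_{k+1}}$ that performs the second flip. Since it flips \emph{GlobalColor} to $c$, its token color is $\overline{c}$, and the lemma supplies a conflicting process $P_{\widetilde{t_{k+1}}}$ of the same color $\overline{c}$ with a smaller token number, both of which execute line 5 inside $I_{k+1}$ and hence strictly after $P_{i}$'s line 5. Note that $P_{i}$ is distinct from both of them, as its committed color $c$ differs from their color $\overline{c}$. The crux of the argument is to show that at the instant $P_{t_{k+1}}$ evaluates OPPOSITECOLOR$(\overline{c})$ in its exit section (just before $t_{k+1}$), process $P_{i}$ already has $\emph{Token}[i].\emph{color}=c$ written and not yet reset. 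If so, OPPOSITECOLOR finds $P_{i}$ as an active process of the opposite color and returns true, so the test at line 26 prevents $P_{t_{k+1}}$ from flipping, contradicting that it flips at $t_{k+1}$.

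To establish that $P_{i}$'s color is in place, I would split on whether $P_{i}$ has executed line 14 before $P_{t_{k+1}}$ runs OPPOSITECOLOR. If it has, then since $P_{i}$ has not reached line 34, the value $\emph{Token}[i].\emph{color}=c\neq\overline{c}$ is still recorded and we are done. The hard case, and the step I expect to be the main obstacle, is the ``hanging'' case where $P_{i}$ has executed line 5 but not yet line 14, so its recorded color is still $\perp$. Here I would reason as in the informal discussion: because $P_{t_{k+1}}$ and $P_{\widetilde{t_{k+1}}}$ request different sessions, at least one of them, call it $P_{w}$, conflicts with $P_{i}$. Since $P_{i}$ set $\emph{Choosing}[i]$ true at line 4 (before $t_{k}$) and cannot clear it before line 15, which follows line 14 and hence lies after $t_{k+1}$ under the hanging assumption, $\emph{Choosing}[i]$ stays true throughout the window in which $P_{w}$ runs its waiting room. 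Thus when $P_{w}$ reaches line 17 for index $i$, both disjuncts of the guard fail (\emph{Choosing} is true and the sessions differ), so $P_{w}$ blocks until $P_{i}$ clears \emph{Choosing}, i.e.\ until $P_{i}$ has executed line 14.

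Finally I would close the loop using the ordering of critical-section entries. Both $P_{w}$ and $P_{t_{k+1}}$ enter the CS before $t_{k+1}$: if $P_{w}=P_{t_{k+1}}$ this is immediate since OPPOSITECOLOR runs after its CS, and if $P_{w}=P_{\widetilde{t_{k+1}}}$ it follows because, by Lemma~\ref{lemma1}, $P_{\widetilde{t_{k+1}}}$ must leave the CS before $P_{t_{k+1}}$ enters it. Hence $P_{w}$ clears line 17 for $i$ before $t_{k+1}$, which forces $P_{i}$ to have executed line 14 before $t_{k+1}$, contradicting the hanging assumption. Therefore the hanging case cannot arise, $P_{i}$'s token color is $c$ when $P_{t_{k+1}}$ scans it, and the OPPOSITECOLOR contradiction above completes the proof. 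The delicate bookkeeping will be confirming that $P_{w}$'s line-17 check genuinely falls after $t_{k}$ and before $t_{k+1}$ while $P_{i}$ is still hanging.
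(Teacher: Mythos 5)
Your proof is correct and takes essentially the same route as the paper's: argue by contradiction, invoke Lemmas~\ref{lemma1} and~\ref{lemma2} to obtain the two mutually conflicting processes $P_{t_{k+1}}$ and $P_{\widetilde{t_{k+1}}}$ executing line 5 in $I_{k+1}$, note that one of them conflicts with $P_{i}$ and so must wait at line 17 until $P_{i}$ writes its token color, and conclude that OPPOSITECOLOR sees $P_{i}$'s opposite color and prevents the flip at $t_{k+1}$. The only difference is presentational: you make explicit the ``hanging'' case analysis that the paper compresses into the single assertion that $P_{i}$ must finish the doorway before $P_{t_{i+1}}$ enters the CS.
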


\begin{proof}

Without loss of generality, we assume $P_{i}$ reads the \emph{GlobalColor} of $c$ at line 5 in the time interval $I_{i}$.
Assume the \emph{GlobalColor} is flipped twice at time $t_{i}$ and $t_{i+1}$ before $P_{i}$ finished the exit section.
$P_{t_{i+1}}$ represents the process that flips the \emph{GlobalColor} at $t_{i+1}$.

\begin{center}
\begin{tikzpicture}[line width=1pt][every node/.style={midway}]
\draw[-|] (0,0) --  node[anchor=south] {$I_{i}$} node[anchor=north] {\small{\emph{GC} $= c$}} (3,0)  node[anchor=north,outer sep=3pt] {$t_{i}$} ;
\draw[-|] (3,0) --  node[anchor=south] {$I_{i+1}$} node[anchor=north] {\small{\emph{GC} $= \overline{c}$}}  (6,0) node[anchor=north,outer sep=3pt] {$t_{i+1}$};
\draw[->] (6,0) --  node[anchor=south] {$I_{i+2}$} node[anchor=north] {\small{\emph{GC} $= c$}} (9,0) ;
\end{tikzpicture}
\end{center}

By Lemma \ref{lemma1} and Lemma \ref{lemma2}, in $I_{i+1}$, two conflicting processes $P_{t_{i+1}}$ and $P_{\widetilde{t_{i+1}}}$ execute line 5 and then finish the CS.
So, $P_{i}$ must be conflict at least with either $P_{t_{i+1}}$ or $P_{\widetilde{t_{i+1}}}$.
Before $P_{t_{i+1}}$ enters the CS, $P_{i}$ must finish the doorway and write down its \emph{Token}$[i]$.\emph{color} of $c$.
Thus, after $P_{t_{i+1}}$ finished the CS, it will find out $P_{i}$ has the opposite \emph{Token.color} and fail to update the \emph{GlobalColor}, which is a contradiction with our assumption that $P_{t_{i+1}}$ flips the \emph{GlobalColor} at $t_{i+1}$.
Hence, we proved the theorem.
\qed
\end{proof}

\begin{lem}
\label{lemma4}
If process $P_{i}$ and process $P_{j}$ request conflicting sessions and process $P_{i}$ finished the doorway before process $P_{j}$ executes line 4, then process $P_{j}$ does not enter the CS before process $P_{i}$ finished the CS.
\end{lem}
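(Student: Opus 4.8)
The plan is to prove this FCFS-type statement by contradiction, mirroring the classical Bakery argument but splitting on the two processes' token colors. Suppose $P_j$ enters the CS at some instant while $P_i$ has not yet finished its CS. Then at that instant $P_i$ is still in the waiting room or CS of the invocation in question, so its token has not been reset: $\emph{Token}[i] = (\emph{mysession}_i, \emph{mycolor}_i, \emph{mynumber}_i)$ with $\emph{mysession}_i \notin \{0, \emph{mysession}_j\}$ (the sessions conflict and $P_i$ is active). Because $P_i$ completed its doorway before $P_j$ executed line 4, every read $P_j$ makes of $\emph{Token}[i]$ --- during its own number-picking loop (line 8) and during its waiting-room test on index $i$ (line 17, then the branch at line 18) --- sees these final values; line 17 in particular clears at once, since $\emph{Choosing}[i]$ is already false. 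The contradiction will come from showing that, under these circumstances, $P_j$ cannot have passed its waiting-room test on $P_i$ (line 19 or line 21).

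First I would dispatch the same-color case, $\emph{mycolor}_i = \emph{mycolor}_j$. Since $P_j$ picks its number (lines 7--12) strictly after $P_i$ has written $\emph{Token}[i]$, and $P_i$ is a conflicting process of the same color, the guard at line 9 holds for $j = i$, so $P_j$ folds $\emph{mynumber}_i$ into its running maximum and then increments, yielding $\emph{mynumber}_j > \emph{mynumber}_i$ and hence $(\emph{mynumber}_j, j) > (\emph{mynumber}_i, i)$. At line 19 all three disjuncts are then false (the numbers compare the wrong way, the colors match, and the session is neither $0$ nor $\emph{mysession}_j$), so $P_j$ must wait there until $P_i$ resets its token in the exit section --- which happens only after $P_i$ has finished the CS. This contradicts $P_j$ entering first.

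The harder case, and the one that genuinely uses the colored-token machinery, is the different-color case $\emph{mycolor}_i \neq \emph{mycolor}_j$. Here $P_j$ takes the \textbf{else} branch and waits at line 21, whose only dangerous escape is the disjunct $\emph{GlobalColor} \neq \emph{mycolor}_j$ (the other two fail exactly as in the same-color case). To kill this disjunct I would invoke Lemma~\ref{lemma3}. Since $P_j$ read its color after $P_i$ executed line 5 and obtained the opposite color, \emph{GlobalColor} must have flipped from $\emph{mycolor}_i$ to $\emph{mycolor}_j$ at some instant after $P_i$'s line 5; by Lemma~\ref{lemma3} it cannot flip a second time before $P_i$ leaves the exit section. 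Thus from that flip until $P_i$ exits --- an interval that contains $P_j$'s evaluation of line 21, since that evaluation follows $P_j$'s own line 5 --- we have $\emph{GlobalColor} = \emph{mycolor}_j$, so the escape disjunct is false and $P_j$ is again forced to wait until $P_i$ exits.

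Both cases contradict the assumption, so $P_j$ cannot enter the CS before $P_i$ finishes it. I expect the main obstacle to be the bookkeeping in the different-color case: one must argue carefully that the single permitted color change is precisely the $\emph{mycolor}_i \to \emph{mycolor}_j$ transition and that it has already occurred by the time $P_j$ reaches line 21, so that $\emph{GlobalColor}$ equals $\emph{mycolor}_j$ for the entire duration that $P_i$ remains active. This is exactly where Lemma~\ref{lemma3} is indispensable, and it is the step that fails for the naive generalization discussed earlier in the text.
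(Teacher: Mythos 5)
Your proposal is correct and follows essentially the same route as the paper's own proof: a case split on whether $P_{j}$'s token color matches $P_{i}$'s, with the same-color case forcing a wait at line 19 via the larger token number, and the different-color case invoking Lemma~\ref{lemma3} to rule out the \emph{GlobalColor} escape at line 21. Your contradiction framing and the explicit observation that $\emph{Token}[i]$ retains its doorway values throughout $P_{j}$'s reads merely make explicit what the paper leaves implicit.
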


\begin{proof}

For the sake of concreteness, assume process $P_{i}$ gets the \emph{Token.color} of $c$.
Process $P_{j}$ may have a \emph{Token.color} of either $c$ or $\bar{c}$.
We consider both possibilities.
\begin{enumerate}

	\item If process $P_{j}$ has the \emph{Token.color} of $c$, it will get a larger \emph{Token.number} than $P_{i}$ as $P_{j}$ has not begun to calculate its \emph{Token.number} when $P_{i}$ finished the doorway. 
	Thus $P_{j}$ will wait for $P_{i}$ at line 19 until $P_{i}$ finished the CS and resets its \emph{Token} at line 34.

	\item If process $P_{j}$ gets the \emph{Token.color} of $\bar{c}$, that means the \emph{GlobalColor} was flipped to $\bar{c}$ after $P_{i}$ read the \emph{GlobalColor} of $c$ at line 5. 
	By Lemma \ref{lemma3}, the \emph{GlobalColor} can not be flipped again to $c$ until $P_{i}$ finished the exit section.
	Hence, $P_{j}$ will find out it has the different \emph{Token.color} than $P_{i}$ and waits for it at line 21. None of the conditions in line 21 can become true until $P_{i}$ finished the CS and then reset its \emph{Token}$[i]$.
\end{enumerate}
We showed that in all cases, process $P_{j}$ can not enter the CS before process $P_{i}$ finished the CS.
\qed
\end{proof}

\begin{thm}
\label{fcfs2}
The BWBGME algorithm satisfies the FCFS property.
\end{thm}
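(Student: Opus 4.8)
The plan is to obtain the FCFS property directly as a corollary of Lemma~\ref{lemma4}, which has already done essentially all of the work. First I would unwind the definitions. The doorway of the BWBGME algorithm is lines 3--15, and by definition $P_{i}$ doorway precedes $P_{j}$ exactly when $P_{i}$ completes line 15 before $P_{j}$ begins executing line 3. Since a process executes its own lines sequentially, $P_{j}$'s execution of line 3 necessarily precedes its execution of line 4; hence the assumption that $P_{i}$ finishes its doorway before $P_{j}$ enters the doorway implies that $P_{i}$ finished the doorway before $P_{j}$ executes line 4. This is precisely the temporal hypothesis required by Lemma~\ref{lemma4}.

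Second, I would observe that in the GME setting ``requesting different sessions'' is synonymous with ``conflicting,'' so the remaining hypothesis that $P_{i}$ and $P_{j}$ request conflicting sessions is also met by the premise of P5. With both hypotheses of Lemma~\ref{lemma4} verified, I would apply it to conclude that $P_{j}$ does not enter the CS before $P_{i}$ has finished the CS. In particular, $P_{j}$ does not enter the CS before $P_{i}$ enters it, since entering precedes finishing; this is exactly the statement of property P5. In fact Lemma~\ref{lemma4} delivers a slightly stronger guarantee than FCFS strictly demands.

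There is essentially no obstacle at the level of this theorem itself: all the difficulty has been absorbed into Lemma~\ref{lemma4}, whose proof in turn rests on the key invariant controlling how the \emph{GlobalColor} may be flipped (Lemmas~\ref{lemma1}--\ref{lemma3}). The only point requiring a moment's care is the mild bookkeeping mismatch between ``enters the doorway'' (line 3) in the definition of doorway precedence and ``executes line 4'' in the statement of Lemma~\ref{lemma4}; as noted above, sequential execution within $P_{j}$ resolves this immediately, because doorway precedence (before line 3) is the stronger of the two conditions and thus implies the weaker one (before line 4) needed to invoke the lemma.
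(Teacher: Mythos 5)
Your proposal is correct and follows essentially the same route as the paper, whose proof of Theorem~\ref{fcfs2} is likewise a direct application of Lemma~\ref{lemma4}; you simply spell out the two small steps the paper leaves implicit (that doorway precedence, i.e.\ finishing before $P_{j}$ reaches line 3, implies the lemma's weaker hypothesis about line 4, and that ``does not enter before $P_{i}$ finished the CS'' implies ``does not enter before $P_{i}$ entered the CS''). No gap; your version is just a more careful writeup of the same argument.
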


\begin{proof}
If processes $P_{i}$ and $P_{j}$ request conflicting sessions and $P_{i}$ finished the doorway before $P_{j}$ starts the doorway, by Lemma \ref{lemma4}, $P_{j}$ can not enter the CS before process $P_{i}$ entered the CS.
\qed
\end{proof}

\begin{thm}
The BWBGME algorithm satisfies the Mutual Exclusion property.
\end{thm}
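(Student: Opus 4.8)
The plan is to assume for contradiction that two conflicting processes $P_i$ and $P_j$, requesting different sessions, are in the critical section simultaneously, and to derive a contradiction. The first step is to reduce to the hard case. If $P_i$ completes its doorway before $P_j$ executes line~4, Lemma~\ref{lemma4} already guarantees that $P_j$ cannot enter the CS until $P_i$ has \emph{finished} it, contradicting simultaneous occupancy; the symmetric statement excludes $P_j$ preceding $P_i$. Hence I may assume $P_i$ and $P_j$ are \emph{doorway concurrent}, and I then split on their committed token colors (the colors they write at line~14), writing $\mathit{num}_i,\mathit{num}_j$ for their committed token numbers.

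For the case of equal colors $c$, the argument is a colored bakery argument whose subtlety is the handshake at line~17 together with the transient $\perp$ color written at line~3. I would first observe that $P_i$ and $P_j$ cannot both read the other's color as $\perp$ at line~18: writing down the times at which each reads the other's color and the times at which each commits its own token at line~14 yields a cyclic ordering of these four events, which is impossible. So at least one process, say $P_j$, reads $\mathit{Token}[i].\mathit{color}=c$ and enters the same-color branch (line~19); being in the CS, it must have passed line~19 through $(\mathit{num}_j,j)<(\mathit{num}_i,i)$. It then remains to contradict $P_i$'s presence in the CS. If $P_i$ likewise took the same-color branch on $P_j$, it passed line~19 through $(\mathit{num}_i,i)<(\mathit{num}_j,j)$, impossible under the total order. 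If instead $P_i$ took the $\perp$/different-color branch (line~21) on $P_j$, then $P_i$ observed $\mathit{Choosing}[j]=\mathit{false}$ at line~17 while $P_j$ was still before line~4; since $P_i$ had already committed its own token before that observation, $P_j$ must read $P_i$'s committed color-$c$ token while selecting its number (lines~7--13), forcing $\mathit{num}_j>\mathit{num}_i$ and hence $(\mathit{num}_i,i)<(\mathit{num}_j,j)$, again contradicting $(\mathit{num}_j,j)<(\mathit{num}_i,i)$.

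For the case of different colors, say $P_i$ commits color $c$ and $P_j$ commits color $\bar c$, each process always falls into the else-branch when inspecting the other (line~21), because neither ever sees the other's color equal to its own. Thus $P_i$ can clear line~21 on $j$ only while $\mathit{GlobalColor}=\bar c$ and $P_j$ only while $\mathit{GlobalColor}=c$. Here Lemma~\ref{lemma3} is decisive: ordering the two reads of \emph{GlobalColor} at line~5, say $P_i$ reads $c$ before $P_j$ reads $\bar c$, the color seen changed between the reads, so a flip $c\to\bar c$ occurred after $P_i$'s read, and Lemma~\ref{lemma3} forbids a second flip before $P_i$ leaves the exit section. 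Consequently $\mathit{GlobalColor}$ stays $\bar c$ for the whole time $P_i$ is in the CS, so $P_j$ cannot see $\mathit{GlobalColor}=c$, nor can it see $P_i$'s session or color relax, until $P_i$ has exited; hence $P_j$ enters the CS only after $P_i$ has left, contradicting simultaneity.

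I expect the main obstacle to be the equal-color analysis, specifically controlling the race created by the \emph{Choosing} flag and the transient $\perp$ color: the danger is that both processes slip past one another at line~21 via the ``color now matches'' disjunct without ever comparing numbers. The two ingredients that defeat this --- the impossibility of both $\perp$-reads by a timing cycle, and the \emph{Choosing}-flag synchronization that turns a $\perp$-read into a guarantee that the reader committed first and therefore holds the smaller number --- are exactly the delicate points, whereas the different-color case is comparatively routine once the key invariant (Lemma~\ref{lemma3}) is applied.
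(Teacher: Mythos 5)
Your proposal is correct and follows essentially the same route as the paper's own proof: reduce to the doorway-concurrent case via Lemma~\ref{lemma4}, then split on the two processes' token colors, settling the equal-color case by the bakery comparison of $(\emph{number}, \emph{id})$ pairs at line~19 and the different-color case by invoking the key invariant (Lemma~\ref{lemma3}) to pin \emph{GlobalColor} and force one process to wait at line~21 until the other exits. The only real difference is presentational: you handle the \emph{Choosing}/$\perp$ handshake race explicitly (your cycle argument and the $\perp$-read subcase), whereas the paper disposes of it implicitly by noting that the doorway overlap guaranteed by Lemma~\ref{lemma4} forces each process to wait at line~17 until the other has committed its token, which makes the $\perp$-read subcase you analyze vacuous.
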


\begin{proof}
Suppose the algorithm does not satisfy the mutual exclusion property. 
At some point of time, there exist two processes $P_{i}$ and $P_{j}$ requesting different sessions are in the CS simultaneously.
By Lemma \ref{lemma4}, neither $P_{i}$ nor $P_{j}$ can finish their doorway before the other executes line 4.
Thus, processes $P_{i}$ and $P_{j}$ must have an overlap when they execute lines 4-15. 
Since $P_{i}$ and $P_{j}$ have different sessions, they wait for each other at line 17 to finish the doorway to set \emph{Choosing} to false before determining which one of them has the priority.

For the sake of concreteness, we assume that process $P_{i}$ enters the CS with the \emph{Token.color} of $c$.
When $P_{i}$ checks line 18 for $P_{j}$, it may have the \emph{Token}$[j]$.\emph{color} of $c$ or $\bar{c}$.
We analyze both cases. 
\begin{enumerate}

    \item Process $P_{i}$ finds that $P_{j}$ has the same \emph{Token.color} of $c$.
    In order for $P_{i}$ to enter the CS, it must have a smaller \emph{Token}.\emph{number} in comparison to $P_{j}$. 
    So, process $P_{j}$ can not enter the CS before $P_{i}$ finished the CS and reset its \emph{Token}$[i]$ as it has the larger \emph{Token.number}, contradicting the assumption.

    \item Process $P_{i}$ finds that process $P_{j}$ has the opposite \emph{Token}.\emph{color} of $\bar{c}$.
    There are two possibilities: (I) after process $P_{j}$ reads the \emph{GlobalColor} of $\overline{c}$ at line 5, the \emph{GlobalColor} is flipped to $c$ and then process $P_{i}$ executes line 5 get the \emph{Token}$[i]$.\emph{color} of $c$ or (II) after process $P_{i}$ reads the \emph{GlobalColor} at line 5 of $c$, it is flipped to $\overline{c}$ and then process $P_{j}$ read it at line 5 and got the \emph{Token}$[j]$.\emph{color} of $\bar{c}$.
    In order for process $P_{i}$ enter the CS, it must find out the \emph{GlobalColor} is $\overline{c}$ when it checks line 21.
    Therefore, by Lemma \ref{lemma3}, only case II can be true.
    On the other hand, when process $P_{j}$ executes line 21, it will find that none of the conditions are satisfied because  the $\emph{GlobalColor}$ can not be flipped to $c$ again by Lemma \ref{lemma3}.
    Therefore, $P_{j}$ will wait for $P_{i}$ until it finishes the exit section, contradicting the assumption that $P_{j}$ and $P_{i}$ stays in the CS at the same time.
\end{enumerate}

We proved that in all cases we get a contradiction with our assumption, and hence, the algorithm satisfies the Mutual Exclusion property.
\qed
\end{proof}

\begin{thm}
The BWBGME algorithm satisfies the Bounded Exit property.
\end{thm}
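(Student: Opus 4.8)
The plan is to argue directly from the syntactic structure of the exit section, exactly as was done for the Generalized Lamport's Bakery Algorithm in Theorem~\ref{be}, only now accounting for the extra bounded loop hidden inside the color-update logic. First I would identify the exit section precisely: it comprises lines 25--34 of Fig.~\ref{BWGMEAlgo}, namely the test whether \emph{mynumber} $\neq 1$, the conditional call to the method OPPOSITECOLOR(\emph{mycolor}), a possible single write to \emph{GlobalColor} determined by the returned boolean and the process's own token color, and finally the single write resetting \emph{Token}$[i]$ to $(0,\perp,0)$. The crucial observation I would record is that none of these lines is a \textbf{wait until} statement; every statement is either a comparison, an assignment, or a method invocation whose body likewise contains no waiting.

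Next I would bound the only component that is not manifestly constant-time, namely the call to OPPOSITECOLOR shown in Fig.~\ref{method-color-black-white-GME}. This method is a single \textbf{for} loop running $j$ from $1$ to $N$, where each iteration performs one read of \emph{Token}$[j]$ followed by a constant-size test on \emph{other.session} and \emph{other.color} before either returning \TRUE\ or proceeding to the next iteration. Since the loop body has no busy-wait and the loop executes at most $N$ times before returning \FALSE, the method terminates within $O(N)$ steps of the invoking process, and this bound is independent of the scheduling or internal progress of every other process.

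Finally I would combine these facts: along any control-flow path through lines 25--34, the process performs at most a constant number of comparisons and writes together with at most one invocation of OPPOSITECOLOR. Hence the entire exit section completes within $O(N)$ of the process's own steps, regardless of the behavior of other processes, which is precisely what property \textbf{P3} demands. I expect no genuine obstacle in this argument; the only subtlety worth stating explicitly is that Bounded Exit requires termination within a \emph{bounded} number of the process's own steps rather than a \emph{constant} number, so the $O(N)$ bound arising from the loop in OPPOSITECOLOR is entirely acceptable because $N$ is a fixed parameter of the system. With that remark the claim follows immediately.
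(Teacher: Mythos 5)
Your proposal is correct and takes essentially the same route as the paper: the paper's proof simply observes that the exit section contains no busy-wait loop and therefore terminates in a bounded number of the process's own steps. You merely make explicit what the paper leaves implicit, namely that the only loop involved, the \textbf{for} loop inside OPPOSITECOLOR, runs at most $N$ iterations, so the bound is $O(N)$ rather than constant, which is still ``bounded'' as property P3 requires.
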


\begin{proof}
Since the exit section does not contain any busy-wait loop, a process that enters the exit section will finish it within a bounded number of its own steps.
\qed
\end{proof}

\begin{thm}
The BWBGME algorithm satisfies the Concurrent Entry property.
\end{thm}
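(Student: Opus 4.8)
The plan is to mirror the concurrent-entry argument already used for the GLB algorithm (Theorem~\ref{ce}): I will show that, in the absence of conflicting processes, every \textbf{wait until} statement in the waiting room (lines 16--23) is already satisfied at the moment $P_{i}$ reaches it, so that $P_{i}$ runs through the single \textbf{for} loop in a bounded number of its own steps and enters the critical section.

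First I would fix the hypothesis precisely: assume $P_{i}$ requests session \emph{mysession}, and that throughout the time $P_{i}$ is in its entry section no other process requests a session different from \emph{mysession}. The first step is then to establish the invariant that for every $P_{j} \neq P_{i}$ the value \emph{Token}$[j]$.\emph{session} lies in $\{0, \emph{mysession}\}$ at every instant of $P_{i}$'s waiting room. This holds because a process writes its session field only at line~3 and preserves it at line~14; hence an active non-conflicting process carries session \emph{mysession}, while an idle process carries the reset value $0$ written at line~34.

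Next I would dispatch the three waits. For lines~19 and~21 the argument is immediate, since each of these conditions contains the disjunct \emph{Token}$[j]$.\emph{session} $\in \{0, \emph{mysession}\}$, which the invariant makes permanently true; thus neither wait can ever block $P_{i}$, regardless of which branch line~18 selects. The delicate case is line~17, whose test asks only whether \emph{Choosing}$[j]$ is false \emph{or} whether \emph{Token}$[j]$.\emph{session} equals \emph{mysession} --- it does not mention the value $0$ explicitly. Here I would invoke the ordering of lines~3 and~4: a process sets \emph{Choosing}$[j]$ to true (line~4) only after it has written its session field (line~3), so whenever \emph{Choosing}$[j]$ is true the field \emph{Token}$[j]$.\emph{session} already equals $P_{j}$'s requested session, which under the no-conflict hypothesis is exactly \emph{mysession}. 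Consequently, if \emph{Choosing}$[j]$ is true the second disjunct of line~17 holds, and if \emph{Choosing}$[j]$ is false the first disjunct holds; either way $P_{i}$ passes line~17 without waiting.

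The main obstacle I anticipate is precisely this line-17 subtlety: unlike lines~19 and~21, the test is phrased with ``$= \emph{mysession}$'' rather than ``$\in \{0, \emph{mysession}\}$'', so one must rule out any window in which \emph{Token}$[j]$.\emph{session} is $0$ while \emph{Choosing}$[j]$ is true, and it is the placement of line~3 before line~4 that closes this window. Once all three waits are shown to be non-blocking, the conclusion follows as in the GLB case: $P_{i}$ completes the one \textbf{for} loop of at most $N$ iterations, each costing only a bounded number of steps, and enters the critical section, which establishes the Concurrent Entry property.
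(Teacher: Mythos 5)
Your proposal is correct and follows essentially the same route as the paper's proof: show that the session-membership disjuncts make lines 19 and 21 non-blocking, and that line 17 cannot block because \emph{Choosing}$[j]$ and \emph{Token}$[j]$.\emph{session} are coupled by the code's structure. Your handling of the line-17 subtlety (line 3 precedes line 4, so \emph{Choosing}$[j]$ true implies the session field is already written and hence equals \emph{mysession}) is just the contrapositive of the paper's observation that \emph{Token}$[j]$.\emph{session} $= 0$ forces \emph{Choosing}$[j]$ to be false; if anything, your version states the key ordering argument more explicitly than the paper does.
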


\begin{proof}
If a process $P_{i}$ requests a session, and no other process requests a different session, that means for every other process $P_{j}$, it holds that \emph{Token}$[j]$.\emph{session}$\in\{0, $\emph{mysession}\} from the view of $P_{i}$.
Since process $P_{i}$ always checks \emph{Token}$[j]$.\emph{session} $\in\{0, $\emph{mysession}\} in busy-wait lines 19 and 21, it will not wait on these two lines. 
If a process $P_{j}$ has the session of 0, it is in the remainder section and so \emph{Choosing}$[j]$ is false. 
Thus, in line 17 process $P_{i}$ can not be blocked by any such process because it will find that either $\emph{Token}[j].\emph{session} =  \emph{mysession}$ or $\emph{Choosing}[j] = $ false. 
Hence, $P_{i}$ can not wait at any line when there is no conflict and so, it enters the CS within a bounded number of its own steps.
\qed
\end{proof}

\begin{thm}
\label{deadlock}
 The BWBGME algorithm satisfies the Deadlock Freedom property.
\end{thm}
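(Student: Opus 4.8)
The plan is to mirror the contradiction argument used for the GLB algorithm (Theorem~\ref{df}), adapted to the colored-token priority structure. I would begin by assuming P6 fails: there is an execution in which a nonempty set $S$ of processes remains forever in the entry section without ever entering the CS, and no process enters the CS infinitely often. Since there are only $N$ processes, the second assumption forces the \emph{total} number of CS entries over the whole execution to be finite, and hence so is the number of CS exits. Because \emph{GlobalColor} is written only inside the exit section (lines 25--33), it is modified only finitely often, so there is a time $T$ after which it is frozen to some value $g$.

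Next I would argue that the entire shared state stabilizes after $T$. Each stuck process executes its (wait-free) doorway exactly once and then remains in the waiting room; therefore after $T$ every process in $S$ holds a fixed \emph{Token} value and has set its \emph{Choosing} bit to \textbf{false}, while every process not in $S$ has returned to the remainder section with \emph{Token} reset to $(0,\perp,0)$ at line~34 and \emph{Choosing} \textbf{false}. In particular, line~17 never blocks after $T$, so a process in $S$ can only be delayed, at line~19 or line~21, by a \emph{conflicting} process, i.e. one whose recorded session is neither $0$ nor its own.

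I would then read off lines 18--22 with $\emph{GlobalColor} = g$ to obtain a priority order among conflicting processes: a process whose token color is $\overline{g}$ takes priority over one whose token color is $g$, since a $g$-colored process waiting at line~21 never satisfies any of its disjuncts while an $\overline{g}$-colored process passes immediately because $\emph{GlobalColor} \neq \emph{mycolor}$; and among conflicting processes of the same token color, the one with the smaller token number (ties broken by process identifier, as in line~19) takes priority. Using this, I would exhibit a process $P^{*}\in S$ delayed by no conflicting process. If some process in $S$ has token color $\overline{g}$, take $P^{*}$ to be the one of smallest token number (ties by identifier) among them; it beats every conflicting $g$-colored process by color and every conflicting $\overline{g}$-colored process by number. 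Otherwise every process of $S$ has color $g$, and I take $P^{*}$ of smallest token number overall; it beats every conflicting process, all of which are $g$-colored. Since identifiers are distinct, this minimum is unique and well defined.

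With the state frozen after $T$, every wait-until condition guarding $P^{*}$ is permanently satisfied: line~17 by the frozen \textbf{false} \emph{Choosing} bits, the non-conflicting iterations by the session test $\emph{Token}[j].\emph{session}\in\{0,\emph{mysession}\}$, and the conflicting iterations by the priority just established. By liveness $P^{*}$ therefore traverses the loop of lines 16--23 and enters the CS, contradicting $P^{*}\in S$, and P6 follows. I expect the main obstacle to be the stabilization step rather than the minimal-element argument: one must argue carefully that finiteness of CS entries freezes not only \emph{GlobalColor} but all \emph{Token} and \emph{Choosing} values, and that no process is trapped at line~17 forever (which would require some neighbor to re-enter its doorway infinitely often, hence to complete invocations infinitely often, contradicting the finiteness of CS entries). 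Once stabilization is in hand, the two-case color analysis delivers the blocked-by-no-one process directly.
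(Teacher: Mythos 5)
Your proof is correct and takes essentially the same approach as the paper's: freeze \emph{GlobalColor} after the finitely many CS entries force finitely many flips, note that line 17 cannot block forever, and then use the frozen color together with a minimal-token-number argument to exhibit a process in $S$ that passes lines 19 and 21, a contradiction. The only cosmetic difference is packaging: the paper eliminates the two color classes of $S$ in two successive minimal-element steps (first the class whose token color is opposite to the frozen \emph{GlobalColor}, then the other), whereas you select the single unblocked process $P^{*}$ directly; the underlying case analysis is identical.
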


\begin{proof}
Suppose the algorithm does not satisfy the deadlock freedom property.
Then there is an execution of the algorithm in which a nonempty set $S$ of processes that enter the entry section but none of them enters the CS, and no process enters the CS infinitely often. 
As every process in the entry section will finish the doorway eventually and set the \emph{Choosing} to be false, processes in set $S$ can not wait on line 17 forever.
Let $S_{1}$ be the subset of $S$ that consists of processes having the \emph{Token.color} of $c$, and $S_{2}$ be the subset of $S$ that consists of processes having the \emph{Token.color} of $\bar{c}$. 
Since no process enters the CS infinitely often, the \emph{GlobalColor} can not be changed infinitely often .
Assume the \emph{GlobalColor} remains as $c$ after some time.

We observe that processes in $S_{2}$ can not wait on line 21 forever, because any process in $S_{2}$ will find that their \emph{Token.color} is different with the \emph{GlobalColor} and stop waiting. 
So, processes in $S_{2}$ can only wait on line 19. 
Also, we observe that processes in $S_{2}$ can not wait for any process in $S_{1}$ on line 19, because processes in $S_{2}$ will find that they have different \emph{Token.color} with processes in $S_{1}$ at line 19, which will make processes in $S_{2}$ immediately terminate waiting. 
Thus, a process in $S_{2}$ must wait for another process in $S_{2}$ at line 19.
In the set $S_{2}$, there must be a process that has the smallest \emph{Token.number} and that process will pass all other processes in $S_{2}$ at line 19 and enter the CS.
This contradicts with our assumption that processes in $S$ can not enter the CS.
So $S_{2} = \emptyset$.

On the other hand, a process in $S_{1}$ can not wait for any other process in $S_{1}$ at line 21 because it will find they have the the same \emph{Token.color} and stop waiting.
Therefore, every process in $S_{1}$ must wait for another process in $S_{1}$ at line 19.
In the set $S_{1}$, there must exist a process that has the smallest \emph{Token.number} among all processes in $S_{1}$ and that process will pass all other processes in $S_{1}$ and enter the CS, which is a contradiction.
So $S_{1} = \emptyset$.

Hence, we have $S = S_{1} \cup S_{2} =  \emptyset$, which contradicts with our assumption and the theorem is proved.
\qed
\end{proof}

\begin{thm}
The BWBGME algorithm satisfies the Starvation Freedom property.
\end{thm}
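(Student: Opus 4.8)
The plan is to reduce starvation freedom to the two properties we have already established, exactly as was done for the Generalized Lamport's Bakery Algorithm in Theorem~\ref{sf}. Lamport proved (page 330 of \cite{Lamp86}) that any algorithm satisfying both the deadlock freedom property and the FCFS property necessarily satisfies starvation freedom. Since we have shown FCFS in Theorem~\ref{fcfs2} and deadlock freedom in Theorem~\ref{deadlock}, the cleanest presentation is to invoke Lamport's meta-theorem and combine these two results in essentially one line.

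Before appealing to Lamport's result, I would first confirm that its hypotheses really are met in the group mutual exclusion setting, since our FCFS property (P5) is conditional on conflicting sessions rather than unconditional. Lamport's argument rests on the observation that the set of processes able to delay a fixed waiting process $P_{i}$ can only shrink over time: by FCFS no process entering its doorway after $P_{i}$ completes its own doorway can overtake $P_{i}$, so the processes that can block $P_{i}$ are confined to the finitely many that doorway-precede it. The one point worth checking here is that non-conflicting processes never block $P_{i}$ in the waiting room at all: for any $P_{j}$ with the same session as $P_{i}$ we have $\emph{Token}[j].\emph{session} = \emph{mysession}$, so the wait conditions at lines 17, 19 and 21 are all vacuously satisfied. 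Hence the only processes relevant to the shrinking-set argument are the conflicting ones, and for those P5 applies directly.

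With the hypotheses verified, deadlock freedom (Theorem~\ref{deadlock}) guarantees that the system keeps making progress, so the finite set of conflicting processes that doorway-precede $P_{i}$ must eventually clear the critical section; once that set is exhausted, $P_{i}$ faces no further delay and enters the CS. The main obstacle, such as it is, does not lie in this final theorem but in the groundwork already completed, namely establishing deadlock freedom and FCFS; granting those, together with the remark that non-conflicting processes impose no wait, the present statement follows with no further work.
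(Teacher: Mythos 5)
Your proposal is correct and takes essentially the same route as the paper: the paper's proof likewise invokes Lamport's result that deadlock freedom together with FCFS implies starvation freedom, and simply combines Theorem~\ref{fcfs2} with Theorem~\ref{deadlock}. Your additional remark verifying that non-conflicting processes never delay a waiting process (so the conditional, session-based FCFS suffices for Lamport's argument) is a reasonable extra precaution but does not change the substance of the argument.
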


\begin{proof}
Lamport proved that \cite{Lamp86} if an algorithm satisfies the deadlock freedom property and the FCFS property, then it necessarily satisfies the starvation freedom property.
Hence, combining Theorem \ref{fcfs2} and Theorem \ref{deadlock}, we know the algorithm has the starvation freedom property.
\qed
\end{proof}

\begin{thm}
\label{bwbound}
The shared variables used in the BWBGME algorithm are bounded.
\end{thm}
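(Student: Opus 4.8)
The plan is to observe that the only field of any shared variable that is not bounded \emph{a priori} is \emph{Token.number}: \emph{GlobalColor} is a single bit, \emph{Token.color} takes one of the three values $\{\textbf{black},\textbf{white},\perp\}$, and each \emph{Choosing}$[i]$ is boolean, while the \emph{session} component is supplied by the application and is explicitly excluded from the space measure (as noted in Section~\ref{bwgme}). Hence it suffices to prove that \emph{Token.number} never exceeds $N+1$.

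First I would fix a maximal monochromatic interval $I_{i}$ on which $\emph{GlobalColor}=c$ and localize attention to processes whose token color is $c$ and which read \emph{GlobalColor} during $I_{i}$. Using Lemma~\ref{lemma3}, a color-$c$ process that read \emph{GlobalColor} in an earlier same-color interval $I_{i-2}$ has finished its exit section before $t_{i-1}$, and one that reads in a later same-color interval $I_{i+2}$ only starts after $t_{i+1}$; the adjacent intervals $I_{i\pm 1}$ carry the opposite color $\overline{c}$. Consequently every color-$c$ process that can influence a token-number computation inside $I_{i}$ (by being counted at lines 7--13) must itself have read \emph{GlobalColor} during $I_{i}$.

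Next, from any process that obtains \emph{Token.number} $=k$ I would build a descending chain $Y_{k},Y_{k-1},\ldots,Y_{1}$ of invocations with \emph{Token.number} exactly $k,k-1,\ldots,1$: Lemma~\ref{lemma1} guarantees that an invocation with number $m\ge 2$ has a conflicting same-color invocation $Y_{m-1}$ of number $m-1$ which it reads while completing its doorway, and that it cannot enter the CS until $Y_{m-1}$ has left it. Thus the chain is serialized through the critical section in the order $Y_{1},Y_{2},\ldots,Y_{k}$, each entering only after its predecessor exits, and---because each $Y_{m}$ reads $Y_{m-1}$ while $Y_{m-1}$ still shows number $m-1$ (hence before $Y_{m-1}$ resets), combined with Lemma~\ref{lemma3}---all $k$ invocations read \emph{GlobalColor} inside the single interval $I_{i}$. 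Since the numbers $1,\ldots,k$ are distinct, $Y_{1},\ldots,Y_{k}$ are distinct \emph{invocations}; it remains to bound the number of distinct \emph{processes} they use.

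The main obstacle is exactly this last counting step: controlling how a single process can reappear in the chain by exiting one invocation and re-entering with a fresh, larger number still inside $I_{i}$. The key is the exit logic (lines 25--33): a re-entering process must have left its previous invocation \emph{without} flipping \emph{GlobalColor}, for otherwise \emph{GlobalColor} would become $\overline{c}$ and $I_{i}$ would end before the re-entry reads $c$. Inside $I_{i}$ a color-$c$ invocation exits without flipping only if its number is $1$, or its number is $\ge 2$ but OPPOSITECOLOR$(c)$ returns true, i.e. an active opposite-color process is present at the exit. I would argue that the number-$1$ case supplies exactly one reusable slot per chain (there is a single invocation of number $1$), and that the opposite-color-blocked case cannot manufacture further reuse: any such opposite-color witness must be an active process that read \emph{GlobalColor} in $I_{i-1}$ (Lemma~\ref{lemma3}), so it occupies one of the $N$ process slots and, by the waiting discipline of lines 19--21, cannot coexist with the very portion of the chain it would need to enable. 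Carrying this charging argument through shows that at most one process is reused, whence $k\le N+1$. Combining this bound with the trivially bounded bit, color, and boolean fields establishes that every shared variable maintained by the algorithm is bounded, completing the proof of Theorem~\ref{bwbound}.
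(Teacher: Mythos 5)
Your reduction to bounding \emph{Token.number}, your use of Lemma~\ref{lemma3} to place all chain invocations' reads of \emph{GlobalColor} in a single interval $I_{i}$, and your descending chain $Y_{k},\ldots,Y_{1}$ built from Lemma~\ref{lemma1} all match the paper's proof of Theorem~\ref{bwbound}; where you diverge is the final counting step, and that is exactly where your argument has a genuine gap. Everything hinges on your claim that ``the opposite-color-blocked case cannot manufacture further reuse,'' i.e., that no chain invocation with number $\ge 2$ can exit inside $I_{i}$ with OPPOSITECOLOR returning true. Your stated justification --- the witness ``occupies one of the $N$ process slots'' and ``cannot coexist with the very portion of the chain it would need to enable'' --- is not an argument: occupying a process slot is irrelevant (the witness need never join the chain at all), and a witness whose session equals that of the exiting invocation $Y_{a}$ is never waited on by $Y_{a}$ in its waiting room, so $Y_{a}$ by itself cannot eliminate it; indeed this is precisely the situation (same-session, opposite-color processes coexisting in the CS) that makes the GME generalization delicate.

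The missing idea, which is where the paper's proof does its real work, is to use \emph{two adjacent} chain invocations: $Y_{a-1}$ and $Y_{a}$ request different sessions, so any active process with color $\overline{c}$ conflicts with at least one of them. Such a witness read \emph{GlobalColor} in $I_{i-1}$ (by Lemma~\ref{lemma3}), so its conflicting session is already written when the relevant chain member executes its waiting room inside $I_{i}$; and since \emph{GlobalColor} stays $c$ throughout $I_{i}$, lines 17 and 21 force that chain member to wait until the witness has reset its token, after which the witness cannot re-acquire color $\overline{c}$ during $I_{i}$ (any new read of \emph{GlobalColor} in $I_{i}$ yields $c$, and a hanging re-entry shows $\perp$, which OPPOSITECOLOR ignores). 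Hence at the exit of $Y_{a}$ --- which lies in $I_{i}$ whenever its process later re-enters within $I_{i}$ --- OPPOSITECOLOR is false, $Y_{a}$ flips \emph{GlobalColor}, and $I_{i}$ ends, so no re-entry can follow it. This is the content needed to ``carry the charging argument through.'' The paper packages the same sweep differently: it assumes a chain of $N+2$ invocations, uses pigeonhole and the serialization of exits to conclude that $P_{i_{1}}$ and $P_{i_{2}}$ exit before $P_{i_{N+2}}$ begins its doorway, shows that $P_{i_{2}}$'s exit must then flip \emph{GlobalColor}, and contradicts the same-interval claim. Your re-entry-counting formulation would work, and even yields the sharper fact that only the number-$1$ invocation's process can be reused, but only once this two-member sweep lemma is actually proved; as written, the decisive step is asserted rather than established.
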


\begin{proof}

Obviously, every shared variable used in the algorithm is bounded except possibly \emph{Token.number} and \emph{Token.session}.
However, if at all the \emph{Token.session} is unbounded,
it is due to the nature of application and not due to the design of the algorithm. 
In other words, if the underlying application uses only a bounded number of sessions,
so will our algorithm.  Hence, we need to concern ourselves with only \emph{Token.number}.
Here we show that the value of \emph{Token.number} can not be larger than $N+1$.

If not, there must exist a execution sequence

\begin{equation}
\label{number}
P_{i_{1}}{@}5 \rightarrow P_{i_{1}}{@}13 \rightarrow P_{i_{2}}{@}13 \rightarrow P_{i_{3}}{@}13 \rightarrow ... \rightarrow  P_{i_{j}}{@}13 \rightarrow...  \rightarrow P_{i_{N+2}}{@}13
\end{equation}

\noindent such that all processes in the sequence have the same \emph{Token.color} of $c$, and for all $j$ except $1$, $P_{i_{j}}$ gets its \emph{Token}$[i_{j}]$.\emph{number} at line 13 by incrementing the previous largest number \emph{Token}$[i_{j-1}]$.\emph{number}.
$P_{i_{1}}$ gets its \emph{Token.number} of $1$ by virtue of it not finding any conflicting process with the same \emph{Token.color} when it executes lines 7-12.
It is easy to see that for any $j$, $P_{i_{j}}$ and $P_{i_{j+1}}$ have different session numbers as $P_{i_{j+1}}$ increments $P_{i_{j}}$'s \emph{Token.number}. 
Here we prove the theorem by showing that such a sequence doesn't exist.

We first show that every process in the sequence \ref{number} executes line 5 in the same time interval $I$.

If not, there exists two processes in the sequence $P_{i_{k}}$ and $P_{i_{k+1}}$ executes line 5 in different interval.
As they have the same \emph{Token.color}, after one of them executes line 5, the \emph{GlobalColor} is flipped twice before the other process executes line 5.
By Lemma \ref{lemma3}, the previous process already resets its \emph{Token} before the \emph{GlobalColor} is flipped twice, and therefore, $P_{i_{k+1}}$ will fail to read $P_{i_{k}}$'s \emph{Token.number} and increment the number, which contradicts our assumption that every process in the sequence \ref{number} increments the previous largest \emph{Token.number}.
So, every process in the sequence \ref{number} executes line 5 in the same interval $I$.

Since we only have $N$ processes, by pigeon hole principle, at least two 
processes, say $P_{i_{x}}$ and $P_{i_{y}}$ (not necessarily distinct processes, but with $x < y$),
from the above sequence ($P_{i_{1}}, P_{i_{2}}, ..,P_{i_{N+2}}$) must exit their CS before $P_{i_{N+2}}$ executes line number 3. So, we have
\[ P_{i_{x}}{@}34 \rightarrow  P_{i_{y}}{@}34 \rightarrow P_{i_{N+2}}{@}3 \]
It is easy to see that every two adjacent processes in sequence \ref{number} have different sessions, hence, for any $j$, $P_{i_{j+1}}$ can not enter the CS before $P_{i_{j}}$ completely exits and reset its \emph{Token}$[i_{j}]$.  Hence, we can conclude that $P_{i_{1}}$ and $P_{i_{2}}$ must have exited by
the time $P_{i_{y}}$ exits (as $P_{i_{1}}$ and $P_{i_{2}}$ are the first two processes 
in the sequence \ref{number}, $P_{i_{x}}$ and $P_{i_{y}}$ are some processes in the sequence \ref{number} and
the processes in the sequence \ref{number} exit one after the other).
Thus, before $P_{i_{N+2}}$ begins to execute the doorway, $P_{i_{1}}$ and $P_{i_{2}}$ must already have finished their exit section, which is represented by the following sequence.
\[ P_{i_{1}}{@}34 \rightarrow P_{i_{2}}{@}34 \rightarrow P_{i_{N+2}}{@}3 \]
By our previous claim, $P_{i_{1}}$, $P_{i_{2}} \ldots P_{i_{N+2}}$ execute line 5 in
the same interval and so they all get the same color, say $c$. 
In order for $P_{i_{1}}$ to enter the CS, all processes conflicting with $P_{i_{1}}$ in the system, having \emph{Token.color} of $\bar{c}$ must have already finished the exit section and 
reset their \emph{Token}.
In order for $P_{i_{2}}$ to enter the CS, all processes conflicting with $P_{i_{2}}$ in the system, having \emph{Token.color} of $\bar{c}$ must have already finished the exit section and 
reset their \emph{Token}.
Every process that has the \emph{Token.color} of $\bar{c}$ must be in
conflict with either $P_{i_{1}}$ or $P_{i_{2}}$, as these two processes have
different session numbers. 
Hence, after $P_{i_{2}}$ finishes the CS and checks the condition at line 26, it will find that there is no process with the opposite \emph{Token.color} of $\bar{c}$ is in the system.
So, $P_{i_{2}}$ will flip the \emph{GlobalColor} to $\overline{c}$, which immediately starts a new time interval $I'$.
This contradicts our previous claim  that every process in the sequence \ref{number} executes line 5 in the same interval $I$ (as $P_{i_{N+2}}$ in particular will execute line 5 in a different interval).

So, we showed that such a sequence of processes can not exist and so proved that the value of the shared variable $\emph{Token.number}$ is bounded by $N+1$.
\qed
\end{proof}


\begin{thm}
\label{bwcomp}
The BWBGME Algorithm has $O(N)$ shared space complexity and $O(N)$ RMR Complexity under the CC model.
\end{thm}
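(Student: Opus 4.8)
The plan is to treat the two claims separately, disposing of the space bound immediately and spending the bulk of the effort on the RMR analysis. For space, the algorithm's only shared objects are the single bit \emph{GlobalColor}, the length-$N$ array \emph{Token} (each entry a constant-size triple, encodable in a single word), and the length-$N$ boolean array \emph{Choosing}. Summing these gives $\Theta(N)$ shared words, so the shared space complexity is $O(N)$. The private variables \emph{mysession}, \emph{mycolor}, \emph{mynumber} and \emph{other} are not counted.

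For the RMR bound I would account for one full invocation (entry section, CS, exit section) of an arbitrary process $P_{i}$ and show it incurs $O(N)$ remote references. The doorway (lines 3--15) consists of a constant number of reads/writes together with the single \textbf{for} loop of lines 7--12, which reads \emph{Token}$[j]$ once for each of the $N$ processes; this is $O(N)$. The exit section (lines 25--34) is likewise a constant number of writes plus one call to OPPOSITECOLOR, whose loop reads all $N$ token variables once, giving $O(N)$. It remains to bound the waiting room (lines 16--23), whose outer \textbf{for} loop runs $N$ times; the goal is to show each iteration contributes only $O(1)$ RMRs, since then the whole waiting room costs $O(N)$ and the invocation total is $O(N)$.

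The crux is therefore bounding, per index $j$, the RMRs spent in the three spin-waits at lines 17, 19 and 21. In the CC model a busy-wait on a cached variable is free until that variable is overwritten, so it suffices to bound how many times each spin variable can change before the corresponding wait is permanently satisfied from $P_{i}$'s viewpoint. For line 17 I would argue exactly as in the proof of Theorem~\ref{rmrc} for the GLB algorithm: once \emph{Choosing}$[j]$ becomes false the wait terminates, and although $P_{j}$ might re-raise \emph{Choosing}$[j]$ for a fresh conflicting request, it then doorway-follows $P_{i}$ and so (by the doorway-precedence reasoning behind Lemma~\ref{lemma4}) obtains a larger token and cannot re-enter or alter its session until $P_{i}$ has left the CS; hence only $O(1)$ changes occur. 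Line 19 spins on \emph{Token}$[j]$, and any change to it either resets the entry or installs a larger number, in both cases permanently satisfying the condition, so again $O(1)$ references.

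The main obstacle is line 21, the opposite-color wait, since it spins on both \emph{GlobalColor} and \emph{Token}$[j]$. Here I would invoke the key invariant, Lemma~\ref{lemma3}: after $P_{i}$ reads \emph{GlobalColor} at line 5, it cannot be flipped more than once before $P_{i}$ finishes the exit section. Consequently \emph{GlobalColor} changes at most $O(1)$ times during $P_{i}$'s invocation, so the spin on it generates only $O(1)$ RMRs; the accompanying spin on \emph{Token}$[j]$ is bounded exactly as for line 19. Combining these per-$j$ constants over the $N$ iterations yields $O(N)$ for the waiting room, and adding the $O(N)$ doorway and $O(N)$ exit costs establishes the $O(N)$ RMR bound, completing the proof. \qed
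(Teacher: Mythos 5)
Your space bound, your doorway/exit accounting, and your treatment of lines 17 and 19 all match the paper's proof. The genuine gap is in your analysis of line 21. You argue: by Lemma~\ref{lemma3} the \emph{GlobalColor} can change value at most once during $P_{i}$'s invocation, hence the spin on it costs $O(1)$ RMRs per iteration. This conflates \emph{value changes} with \emph{writes}. In the CC model (as the paper's model section states explicitly), every write to a shared variable invalidates all cached copies, even if the written value equals the current value; Lemma~\ref{lemma3} bounds only the number of flips, not the number of writes. Same-value writes to \emph{GlobalColor} really can occur while $P_{i}$ spins at line 21: a process $P_{k}$ holding token color $\bar{c}$ may execute OPPOSITECOLOR before $P_{i}$ has written its $c$-colored token at line 14 (so $P_{k}$ sees no active $c$-colored process and the check passes), and then perform the write $\emph{GlobalColor} := c$ after $P_{i}$ has begun spinning --- costing $P_{i}$ a remote reference while leaving it blocked, since the value is unchanged. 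Nothing in your argument rules out many such writes hitting a single wait, so the per-iteration $O(1)$ claim is false as stated.

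The paper closes exactly this hole with an amortized argument, which is the heart of its proof (and is the same CC-model subtlety that dooms the Burns--Lamport algorithm in Appendix II): each other process $P_{k}$ can cause at most one such spurious invalidation during $P_{i}$'s \emph{entire} waiting room, because a second one would require $P_{k}$ to acquire color $\bar{c}$ in a fresh invocation, which by Lemma~\ref{lemma3} forces that upon exiting, $P_{k}$'s OPPOSITECOLOR test sees $P_{i}$'s $c$-colored token and fails, so $P_{k}$ cannot write \emph{GlobalColor} again. Summing over processes, the spurious invalidations cost at most $N-1$ RMRs over the whole loop of lines 16--23, which still yields the $O(N)$ waiting-room bound --- but only as an amortized bound over the loop, not as a worst-case per-iteration constant. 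Your proof needs this accounting (or an equivalent charging scheme) to go through.
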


\begin{proof}

It is trivial to observe that algorithm is of $O(N)$ shared space complexity.
Therefore, we focus on proving the RMR complexity here.
From line 17 through line 22, there are three busy-wait loops, one each at lines 
17, 19 and 21. Since line 19 and line 21 are located in different branches, it is not difficult to see that a process that goes through lines 17-22 executes only two busy-wait loops viz., line 17 and line 19 or line 17 and line 21.

In line 17, when a process $P_{i}$ is busy waiting for a process $P_{j}$, if \emph{Choosing}$[j]$ changes to false, then process $P_{i}$ will immediately terminate the wait.
It is possible that process $P_{j}$ executes a new invocation and resets the \emph{Choosing}$[j]$ to true again with another conflicting session.
However, in that case, since process $P_{i}$ doorway proceeds $P_{j}$, according the FCFS property, $P_{j}$ can not reenter the CS before $P_{i}$.
So, this can occur at most once.
Thus, line 17 can only involve at most five RMR (three for \emph{Choosing}$[j]$ and two for \emph{Token}$[j]$).

Suppose process $P_{i}$ in the entry section executes line 19 and waits for a process $P_{j}$.
If process $P_{j}$ changes its \emph{Token}$[j]$, it will be either resetting its \emph{Token}, or getting a new \emph{Token} in a new invocation.
If process $P_{j}$ resets its \emph{Token}$[j]$, $P_{i}$ will immediately terminate the wait because it will detect \emph{Token}$[j]$.\emph{session} is $0$.
However, it is possible that $P_{j}$ wants to reenter the CS with a new \emph{Token} before $P_{i}$ detects that $P_{j}$ is in the remainder section.
In view of the FCFS property, process $P_{j}$ can not reenter the CS before process $P_{i}$ enters the CS.
So, sooner or later process $P_{i}$ will be able to compare its \emph{Token}$[i]$ with that of process $P_{j}$.
Therefore, the maximum number of RMR at line 12 is two (for \emph{Token}$[j]$).

Suppose process $P_{i}$ in the entry section executes line 21 and waits for another process $P_{j}$.
Without loss of generality, assume $P_{i}$ gets the \emph{Token.color} of $c$.
If the \emph{GlobalColor} is set to $\bar{c}$, then process $P_{i}$ will find out \emph{GlobalColor} $\neq $ \emph{mycolor} and stop the wait.
Note that \emph{GlobalColor} can not be changed twice to $c$ again when process $P_{i}$ is waiting at line 21 in view of Lemma \ref{lemma3}.
Hence, process $P_{i}$ will be able to detect the fact that the \emph{GlobalColor} has changed sooner or later. 

It is important to realize that if the variable \emph{GlobalColor} is 
overwritten, this will immediately invalidate the cached copies of it, even if the actual 
value of the \emph{GlobalColor} did not really change. This could potentially
cause multiple RMR when a process $P_{i}$ is waiting in line 21.
Suppose that the \emph{GlobalColor} is set again to $c$ by another process $P_{k}$ with the \emph{Token}$[k]$.\emph{color} of $\bar{c}$ executing the exit section. Now, this will cause
$P_{i}$ to migrate the \emph{GlobalColor} to cache again, but  $P_{i}$ will continue
to be blocked as the value of \emph{GlobalColor} did not really change.
In that case, by Lemma \ref{lemma3}, $P_{k}$ can not start a new invocation and set the \emph{GlobalColor} to $c$ again before $P_{i}$ enters the CS.
This is because, if $P_{k}$ gets its \emph{Token.color} of $\bar{c}$ in the new invocation, the \emph{GlobalColor} is actually flipped to $\bar{c}$ after $P_{i}$ executes line 5. 
When $P_{k}$ finished the CS in the new invocation, it will find out $P_{i}$ has the opposite \emph{Token}$[i]$.\emph{color} of $c$ and fail to set the \emph{GlobalColor}.
So, $P_{i}$ will see that the \emph{GlobalColor} is updated, but still get blocked at line 21 for at most $N-1$ times in the whole loop lines 16-23.
Hence, the amortized RMR complexity of \emph{GlobalColor} for each process $P_{j}$ is constant. 

For other conditions at line 21, if process $P_{j}$ resets the \emph{Token}$[j]$, $P_{i}$ will be able to pass process $P_{j}$.
It is possible that $P_{j}$ may want to reenter the CS before $P_{i}$ detected that $P_{j}$ is in the remainder section.
In the new invocation, process $P_{j}$ may execute line 3 and write \emph{Token}$[j]$ of $($\emph{mysession} $, \perp, $\ $0)$ and block $P_{i}$ again.
However, by the FCFS property, process $P_{j}$ can not reenter the CS before process $P_{i}$ and so $P_{i}$ will be able to detect the new \emph{Token}$[j]$ or that the \emph{GlobalColor} is flipped when process $P_{j}$ finished the doorway, which makes process $P_{i}$ stop waiting.  Hence, the overall RMR made at line 21 is constant.  

Since the RMR complexity of lines 17, 19 and 21 is constant.
and lines 17-22 are enclosed within a {\em for loop} that can run a maximum of $N$ times, it follows that the waiting room (lines 16-23) is of $O(N)$ RMR complexity.

It is trivial to see that lines 8-11 costs constant RMR and so, lines 7-12 is of $O(N)$ RMR complexity.
It is not difficult to see that the function used in the algorithm is of $O(N)$ RMR complexity.
It is also seen that other lines involve constant RMR.
Hence, the overall RMR complexity of this algorithm is $O(N)$ in the CC model.
\qed
\end{proof}

\end{document}